\newcommand{\remove}[1]{ }
\newcommand{\ree}{\mathbb{R}}
\newcommand{\nat}{\mathbb{N}}
\newcommand{\LW}{{\cal L}_{\W}}
\newtheorem{sass}{Standing Assumption}
\newtheorem{theorem}{Theorem}[section]
\newtheorem{lemma}[theorem]{Lemma}
\newtheorem{remark}{Remark}[section]
\newtheorem{assumption}{Assumption}[section]
\newtheorem{example}{Example}[section]
\newtheorem{proposition}[theorem]{Proposition}
\newenvironment{proof}{\par\noindent\textit{Proof.}\ }{\hfill$\square$\par}
\let\leq\leqslant
\let\geq\geqslant
\let\tilde\widetilde
\let\cal\mathcal
\newcommand{\HS}{\cal H=(C,F,D,G,\W)}
\newcommand{\C}{C}
\newcommand{\D}{D}
\newcommand{\sol}{\phi}
\newcommand{\vcaexi}{VC-ae($\xi,w$) }
\newcommand{\vcexi}{VC-e($\xi,w$) }
\newcommand{\vcset}{VC-e($\Xi_0\cap \Pi_x(D^c,\W),\LW$) }
\newcommand{\R}{\ensuremath{\mathbb{R}}}
\newcommand{\B}{\ensuremath{\mathbb{B}}}
\newcommand{\N}{\ensuremath{\mathbb{N}}}
\newcommand{\W}{W}
\DeclareMathOperator*{\dom}{dom}
\DeclareMathOperator{\interior}{int}
\DeclareMathOperator{\graph}{graph}
\DeclareMathOperator{\range}{range}
\begin{document}
\begin{frontmatter}

\title{Solution Concepts and Existence Results for Hybrid Systems with Continuous-time Inputs\thanksref{footnoteinfo}}

\thanks[footnoteinfo]{This paper was not presented at any IFAC meeting. Research  partially funded by the European Research Council under the Advanced ERC Grant Agreement PROACTHIS, no. 101055384. Corresponding author: M. Heemels m.heemels@tue.nl.}

\author[Maurice]{W.P.M.H. Heemels}\ead{m.heemels@tue.nl},
\author[Romain]{R. Postoyan}\ead{romain.postoyan@univ-lorraine.fr},
\author[Pauline]{P. Bernard}\ead{ pauline.bernard@mines-paristech.fr},
\author[Koen]{K.J.A. Scheres}\ead{koen.scheres@kuleuven.be},
\author[Ricardo]{R.G. Sanfelice}

\address[Maurice]{Dept. Mechanical Engineering, Eindhoven University of Technology, The Netherlands} 
\address[Romain]{Universit\'e de Lorraine, CNRS, CRAN, F-54000 Nancy, France}
\address[Pauline]{Centre Automatique et Syst\`emes, Mines Paris - PSL, Paris, France} 
\address[Koen]{Department of Electrical Engineering (STADIUS), KU Leuven, Belgium}
\address[Ricardo]{Department of Electrical and Computer Engineering, University of California, Santa Cruz, CA 95064, USA}\ead{ricardo@ucsc.edu}

\begin{keyword}                           
Hybrid dynamical systems; Viability; Systems with inputs; Constrained systems; Differential inclusions.               
\end{keyword}                             

\begin{abstract}                          
In many scenarios, it is natural to model a plant's dynamical behavior using a hybrid dynamical system influenced by exogenous continuous-time inputs. While solution concepts and analytical tools for existence and completeness are well-established for autonomous hybrid systems, corresponding results for hybrid dynamical systems involving (continuous-time) inputs are generally lacking. This work aims to address this gap. We first  formalize notions of a solution for such systems. We then provide conditions to guarantee the existence and the forward completeness of solutions. To this end, we leverage viability theory results and present different conditions depending on the regularity of the exogenous input signals.
\end{abstract}
\end{frontmatter}

\section{Introduction}\label{sec:1}

Hybrid dynamical systems affected by external inputs arise in various scenarios such as hybrid control \cite{LooGru_AUT17a,nes_zac_tee_AUT08,pri_tar_zac_AUT13,prieur-teel-tac11,saez2025hybrid}, networked control \cite{HeeTee_TAC10a,NesTee_Aut_04,zhuang2024robust}, event-triggered control \cite{Scheres_Postoyan_Heemels_2024,BorDol_TAC18a,Mousavi_et-al-cdc2019} and (hybrid) state estimation \cite{petri-et-al-cdc2022,astolfi-et-al-tac2019(uniting),khalil2024hybrid} to name a few. Often, the input is given and is modeled as continuous-time (vector) signal representing measurement noise or external disturbances on the dynamics. It is therefore important to formalize what is meant by a solution to such systems and to determine conditions under which these solutions exist and are complete. For autonomous hybrid systems, these questions have been thoroughly addressed in the literature, see, e.g.,  \cite{Cortes,LygJoh:2003,HeeCam_BOOK_CONTRIB03a,Goe12,liberzon2003switching}. However, much remains to be done for hybrid systems with exogenous inputs.

In this context, we focus on hybrid dynamical inclusions \cite{Goe12} whose flow map, jump map, flow set and jump set can depend on exogenous inputs. Existing works on hybrid dynamical inclusion with inputs consider the case where either
\begin{enumerate}[label=(\roman*)]
    \item combinations of input and state trajectories (solution pairs) are sought and hence, the input is ``free'', see, e.g., \cite{Cai2009,Chai2019}; or,
    \item the input is ``given'' a priori and we search for a state trajectory only \cite{Bernard2020}.
\end{enumerate}
In all these cases, the inputs are given by so-called hybrid signals, meaning that they are defined on hybrid time domains involving both real time $t\in \ree_{\geq 0}$ and jump counter $j\in\nat$ (see Section \ref{subsect:class-systems-preliminaries}). Between two successive jumps, the input is Lebesgue measurable and locally essentially bounded. In \cite{Cai2009,Chai2019}, it is enforced that the domain of the input is the {\em same} as the domain of the solution itself. In certain cases this may be unnatural from a modeling perspective, as often the input is a priori given and the domain of the solution is a consequence of the hybrid dynamics and possibly the hybrid time domain of the input, but not {\em a priori} taken equal. In \cite{Chai2019}, a basic existence result of solution pairs is presented and the properties of maximal solutions (maximal in the sense that they cannot be prolonged) are characterized. Interestingly, as the domains of the input and the state have to be a priori the same, the existence result is, loosely speaking, stated as: ``given an initial state, there is a combination of an input signal and a state signal (on the same hybrid time domain) that form together a solution {\em pair}''. We are not aware of any result regarding the existence of solutions when the input signal is specified a priori for hybrid dynamical inclusions \cite{Goe12}.

Motivated by this gap in the literature, we study the scenario where the input to the system is {\em fixed} a priori and is {given by a continuous-time signal} defined on the nonnegative real line $\ree_{\geq 0}$. Indeed, for many applications, it would be welcome to have such existence and completeness results, as they are often needed to guarantee that the designed controllers (or estimators) are well-posed in presence of external disturbance and noise signals. We thus start by presenting two solution concepts: one where the solution belongs to the flow set at \textit{all} (continuous) times between two consecutive jumps (except at the beginning and end of the flow interval, to allow ``flowing'' from the boundary of the flow set), and another where it belongs only at \textit{almost all} times between two consecutive jumps. We then elaborate on the subtleties and consequences of using each possible variant of solution concepts. Afterwards, we provide tools to guarantee the existence of (nontrivial) solutions given an initial state and an input signal, and their (forward) completeness properties. Herein we differentiate between the variations in the solution concepts and the conditions imposed on the external input signal. Trajectory-dependent conditions are first presented for this purpose. To ease the testing of these conditions, we also provide trajectory-\emph{independent} conditions by exploiting viability results for non-autonomous differential inclusions \cite{Carjua2000viability,frankowska1996measurable,Aubin1984}. Different viability conditions are presented depending on the regularity of the exogenous input signal. As it appears that existing results do not allow the case where the flow condition depends on the input and the corresponding signal is only measurable, we provide a novel condition for this case which arises, for example, in event-triggered control \cite{Scheres_Postoyan_Heemels_2024}.

Compared to the preliminary version of this work \cite{heemels-cdc2021(hybrid-with-inputs)}, the main novel elements are the aforementioned trajectory-independent conditions to guarantee the existence of solutions. These novel results greatly simplify the test of the conditions for the existence of solutions. We also believe that the connection that is made with viability theoretical results for non-autonomous differential inclusions is relevant in its own right. In addition, we provide in this work all the proofs as well as new examples, which are in our view important to appreciate the subtleties of the presented results. Note that \cite{heemels-cdc2021(hybrid-with-inputs)} did not provide any proofs due to lack of space.

\noindent {\bf Notation:}
The sets of all nonnegative and positive integers are denoted $\N$ and $\N_{>0}$, respectively, and the set of rational numbers by $\mathbb{Q}$. The sets of reals and nonnegative reals are indicated by $\R$ and $\R_{\geq0}$, respectively. By $|\cdot|$ we denote the Euclidean norm. For $z\in \ree^n$ and $r\geq 0$, we denote by $\B(x,r)$ the closed ball in $\ree^n$ of radius $r$ around $z$, i.e., $\B(z,r) = \{x\in \ree^n \mid |x-z| \leq r \}$. For a set $A \subseteq \ree^n$, we denote the set of all points that lie within a distance of $r$ from $A$ by $\B(A,r)$, i.e., $\B(A,r) = \{x\in \ree^n \mid |x-a| \leq r \text{ for some } a \in A \}$. The interior of a set $A \subseteq \ree^n$ is denoted by $\interior{A}$ and its closure by $\overline{A}$. For sets $A$, $B \subseteq \ree^n$, we define the sum $A+B :=\{ a+b \mid a\in A\text{ and } b\in B\}$, the ``normal'' difference $A-B:=\{ a-b \mid a\in A\text{ and } b\in B\}$, and the Pontryagin difference $A\ominus B:= \left\{ x \in \ree^n \mid \{x\}+B \subseteq A\right\}$. The complement $S^c$ of a set $S \subseteq \ree^n$ is $S^c:=\{x\in \ree^n \mid x \not\in S\}$ and the (Bouligand) tangent cone $T_S(\xi)$ of $S$ at $\xi \in \ree^{n}$ is the set of all vectors $w\in \ree^{n}$ for which there exist $x_i\in S$, $\tau_i>0$ with $x_i\rightarrow \xi$, $\tau_i \rightarrow 0$ (when $i\rightarrow \infty$) and $w=\lim_{i \rightarrow} \frac{x_i-\xi}{\tau_i}$. For a set $M \subseteq \ree^{n_x} \times \ree^{n_w}$ and a set $\W\subseteq \ree^{n_w}$, we define the projection $\Pi_x(M,\W)$ as $\Pi_x(M,\W) := \{ \xi \in \ree^{n_x} \mid \exists w\in W \text{ s.t. } (\xi,w)\in M \}$.

\section{Solution concepts}
\subsection{Class of systems and preliminaries}\label{subsect:class-systems-preliminaries}

We study hybrid systems with inputs of the form
\begin{equation}
    \begin{cases}
        \dot{x}\in F(x,w) &(x,w)\in\C,\\
        x^+\in G(x,w) &(x,w)\in\D,
    \end{cases}\label{eq:hybridsys}
\end{equation}
where $x$ denotes the state taking values in $\R^{n_x}$, and $w$ the (disturbance) input taking values in $\W\subseteq \R^{n_w}$. Moreover, $\C\subseteq\R^{n_x}\times\W$ is the flow set, $\D\subseteq\R^{n_x}\times\W$ the jump set, $F:\R^{n_x}\times\W\rightrightarrows\R^{n_x}$ the flow map, and $G:\R^{n_x}\times\W\rightrightarrows\R^{n_x}$ the jump map, where $F$ and $G$ are possibly set-valued. Throughout this paper, we have the next standing assumption for technical reasons.

\begin{sass}[\textbf{SA\ref{sa:CW}}] \label{sa:CW} The sets {${\C}$ and ${\W}$ are closed. }
\end{sass}

To formally introduce the solution concept, we recall several notions from \cite{Cai2009,Chai2019,Goe12}. A subset $E\subset\R_{\geq0}\times\N$ is a {\em compact hybrid time domain}, if $E=\bigcup_{j=0}^{J-1}[t_j,t_{j+1}] \times \{j\}$ for some finite sequence of times $0=t_0\leq t_1\leq t_2\leq\ldots\leq t_J$. {It is a {\em hybrid time domain}, if for all $(T,J)\in E$, $E\cap([0,T]\times\{0,1,\ldots,J\})$ is a compact hybrid time domain}. A {\em hybrid signal} is a function defined on a hybrid time domain. For a hybrid time domain $E$, $\sup_t E:=\sup\left\{t\in\R_{\geq0}:\exists j\in\N\;\text{such that}\;(t,j)\in E\right\}$, $\sup_j E:=\sup\left\{j\in\N:\exists t\in\R_{\geq0}\;\text{such that}\;(t,j)\in E\right\}$ and $\sup E:=$ $(\sup_t E,\sup_j E)$. A hybrid signal $\sol$ is called a {\em hybrid arc}, if $\sol(\cdot, j)$ is locally absolutely continuous for each $j$ allowing to consider its continuous-time derivative $\dot\phi(\cdot,j)$ (almost everywhere) on $[t_j,t_{j+1}]$. Finally, the set of all Lebesgue measurable and locally essentially bounded functions from $\ree_{\geq 0} $ to $\R^{n_w}$ is denoted ${\cal L}$, where $n_w\in\nat_{>0}$. For $ \W\subseteq \R^{n_w}$ with $n_w \in \nat_{>0}$, ${\cal L}_{\W}$ is the set of functions $w : \ree_{\geq 0} \to {\W}$ that are Lebesgue measurable and locally essentially bounded.

\subsection{Definitions}\label{subsect:def}

Inspired by \cite{Cai2009,Chai2019}, we introduce the next two novel solution concepts.
\begin{defn} \label{eq:hybridsys_cai}
    A hybrid arc $\phi$ is an {\em e-solution} to $\HS$ for input $w\in\LW$ if
    \begin{description}
        \item[(S1-e)] for all $j\in\N$ such that $I^j:=\{t : (t,j) \in \dom \phi\}$ has nonempty interior, $\dot\phi(t,j)\in F(\phi(t,j),w(t))$ for almost all $t\in\interior I^j$ and $(\phi(t,j),w(t))\in\C$ for all $t\in\interior I^j$;
        \item[(S2)] for all $(t,j)\in\dom\phi$ such that $(t,j+1)\in\dom\phi$, $(\phi(t,j),w(t))\in\D$ and $\phi(t,j+1)\in G(\phi(t,j),w(t))$.
    \end{description}
    It is an {\em ae-solution} for input $w\in\LW$ if (S2) holds together with
    \begin{description}
        \item[(S1-ae)] for all $j\in\N$ such that $I^j$ has nonempty interior, $\dot\phi(t,j)\in F(\phi(t,j),w(t))$ and $(\phi(t,j),w(t))\in\C$ hold for almost all $t\in I^j$.
    \end{description}
\end{defn}

Some comments are in order. First of all, observe that an e-solution is also an ae-solution. Secondly, note that the ae-solution concept is in line with \cite[Section 2]{Cai2009}, where the flow constraint is satisfied {\em almost everywhere} during flow intervals -- though, as a difference to \cite[Section 2]{Cai2009}, an ae-solution is here defined for a given continuous-time input $w$. In contrast, in e-solutions, the flow constraint $(x,w)\in \C$ has to hold ``everywhere'' in the interior of flow intervals (but not necessarily at the boundaries of the flow interval). The e-solutions are closer in nature to \cite{Chai2019,Bernard2020} (although note that in \cite{Chai2019,Bernard2020,Cai2009,220} hybrid inputs on hybrid time domains are used instead of continuous-time functions defined on $\ree_{\geq 0}$). Thirdly, in \cite{Bernard2020}, (S1-e) is used in the notion of solution, but given the fact that during flow the input is required to be locally absolutely continuous, (S1-e) and (S1-ae) coincide in this setting. Moreover, generally we can state that if the input $w$ is continuous, then both solution concepts coincide as $\C$ is closed by SA\ref{sa:CW}. In fact, in the special case of no inputs, ae-solutions are e-solutions due to local absolute continuity of hybrid arcs during flow intervals, and we recover the solution concept in \cite{Goe12}. However, in a general setting where the input $w$ models a possibly discontinuous noise/disturbance, both definitions do not coincide as was illustrated by various examples in \cite{heemels-cdc2021(hybrid-with-inputs)}. One may choose to use one or the other depending on the context and the intended purposes. In any case, we are interested in this paper to study the existence and completeness properties for both e-solutions and ae-solutions given an input {in $\LW$}.

The following terminology will be used in this paper for both e-solutions and ae-solutions.

\begin{defn}
    An (e or ae-)solution $\phi$ to $\cal H$ for a given input $w\in{\LW}$ is called {\em nontrivial}, if $\dom\phi$ contains at least two points. It is said to be {\em maximal}, if there does not exist another solution $\psi$ to $\mathcal{H}$ for the same input $w$ such that $\dom \phi$ is a proper subset of $\dom\psi$ and $\phi(t,j)=\psi(t,j)$ for all $(t,j)\in\dom\phi$. We denote the set of all maximal e-solutions and ae-solutions to $\mathcal{H}$ for input $w$ by $\mathcal{S}^e_\mathcal{H}(w)$ and $\mathcal{S}^{ae}_\mathcal{H}(w)$, respectively. We say that the solution $\phi$ is {\em complete} if $\dom \phi$ is unbounded, and we say that it is {\em $t$-complete} if $\sup_t\dom \phi = \infty$.
\end{defn}

\section{Existence and completeness of solutions}
\label{sec:existence}

In this section, we start by providing conditions for the existence of nontrivial solutions and their properties for hybrid systems of the form \eqref{eq:hybridsys}, thereby extending the results in \cite[Proposition 2.10]{Goe12} without (external) inputs and in \cite[Proposition 3.4]{Chai2019} for a hybrid input to the case of continuous-time measurable inputs. This forms the main result of this section (Proposition \ref{prop:210}). After that, we provide examples illustrating the results and then present easier conditions to verify the hypothesis of the main result in this section.

The following ingredients are {essential in the following}. This concerns the following ``viability'' conditions VC-e($\Xi,\mathbb{W}$) and VC-ae($\Xi,\mathbb{W}$) for a set $\mathbb{W}\subset{\cal L_W} $ of input functions and a set $\Xi\subset \ree^{n_x}$ of initial states, given by

\begin{description}
   \item[VC-e($\Xi,\mathbb{W}$):] \hspace{0cm} For all $\xi \in \Xi$ and all $w \in \mathbb{W}$, there exist $\epsilon>0$ and an absolutely continuous function $z:[0,\epsilon]\rightarrow\R^{n_x}$ such that $z(0)=\xi$, $\dot z(t) \in F(z(t),w(t))$ for almost all $t\in[0,\epsilon]$ and $(z(t),w(t))\in {\C} $ for all $t\in(0,\epsilon)$.
   \item[VC-ae($\Xi,\mathbb{W}$):] \hspace{0cm} For all $\xi \in \Xi$ and all $w \in \mathbb{W}$ there exist $\epsilon>0$ and an absolutely continuous function $z:[0,\epsilon]\rightarrow\R^{n_x}$ such that $z(0)=\xi$, $\dot z(t) \in F(z(t),w(t))$ and $(z(t),w(t))\in {\C} $ for almost all $t\in[0,\epsilon]$.
\end{description}

In case $\Xi=\{\xi\}$ and $\mathbb{W}=\{w\}$ are singletons, we write VC-ae($\Xi,\mathbb{W}$) as VC-ae($\xi,w$), and similarly for VC-e.

\subsection{Existence of non-trivial solutions}

The next proposition provides necessary and sufficient conditions for the existence of non-trivial solutions.
\begin{proposition}\label{prop:210}
     Consider the hybrid system $\HS$.

    \noindent (i) {\bf [one initial state, one input]}\ There exists a nontrivial e-solution $\phi$ to $\cal H$ with input $w\in\LW$ and $\phi(0,0)=\xi\in\R^{n_x}$ if and only if $(\xi,w(0))\in\D$ or VC-e($\xi,w$) holds.

    \noindent (ii) {\bf [multiple initial states and inputs]}\ Let $\Xi_0\subseteq \R^{n_x}$ be given. For all $\xi\in\Xi_0 $ and all $w\in\LW$ there exists a nontrivial e-solution $\phi$ to $\cal H$ with input $w$ and $\phi(0,0)=\xi$ if and only if VC-e($\Xi_0\cap \Pi_x(D^c,\W),\LW$) holds.\\
    When VC-e {\em (i)} and {\em (ii)} is replaced by VC-ae, then all statements above apply to ae-solutions as well.
\end{proposition}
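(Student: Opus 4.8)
The plan is to read off both parts almost directly from the solution concept of Definition~\ref{eq:hybridsys_cai}, obtaining (i) by unpacking what ``nontrivial'' forces on a hybrid time domain, and then deriving (ii) from (i) by a short logical argument together with one observation about the input value at the initial time.

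For the ``if'' direction of (i) I would split on the two disjuncts. If $(\xi,w(0))\in\D$, I build the degenerate arc on the domain $\{0\}\times\{0,1\}$ by setting $\sol(0,0)=\xi$ and $\sol(0,1)\in G(\xi,w(0))$ (nonempty because jumps are admissible from $\D$): then (S2) holds at $(0,0)$ while (S1-e) is vacuous, so $\sol$ is a nontrivial e-solution. If instead \vcexi holds, the associated arc $z$ on $[0,\epsilon]$ yields $\sol(\cdot,0):=z$ on $[0,\epsilon]\times\{0\}$, whose interior is nonempty; (S1-e) holds by construction and (S2) is vacuous. For the ``only if'' direction I would use the structure of a hybrid time domain: a nontrivial $\sol$ with $\sol(0,0)=\xi$ either possesses an initial flow interval $[0,t_1]\times\{0\}$ with $t_1>0$, in which case $z:=\sol(\cdot,0)$ restricted to $[0,t_1]$ (absolutely continuous as a hybrid arc) witnesses \vcexi via (S1-e), or else $(0,1)\in\dom\sol$, in which case (S2) applied at $(0,0)$ gives $(\xi,w(0))\in\D$.

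For (ii), part (i) shows that the existence of a nontrivial e-solution for every $\xi\in\Xi_0$ and every $w\in\LW$ is equivalent to the statement $(\star)$: for all such $\xi,w$, either $(\xi,w(0))\in\D$ or \vcexi holds. It then remains to prove that $(\star)$ is equivalent to \vcset. The direction ``\vcset$\Rightarrow(\star)$'' splits on whether $\xi\in\Pi_x(D^c,\W)$: if it is, then \vcset gives \vcexi directly; if it is not, then $(\xi,w')\in\D$ for every $w'\in\W$, and since $w(0)\in\W$ we obtain $(\xi,w(0))\in\D$.

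The converse ``$(\star)\Rightarrow$\vcset'' is the delicate step and the one I expect to be the main obstacle, because for a fixed $w$ the hypothesis $(\star)$ may only return $(\xi,w(0))\in\D$ rather than the desired viability property. The resolution rests on noting that \vcexi does not depend on the single value $w(0)$, since its differential inclusion is imposed only almost everywhere and its flow-set membership only on the open interval $(0,\epsilon)$. Thus, given $\xi\in\Xi_0\cap\Pi_x(D^c,\W)$, I would choose $w^*\in\W$ with $(\xi,w^*)\notin\D$ (available precisely because $\xi\in\Pi_x(D^c,\W)$) and form $\tilde w\in\LW$ that agrees with $w$ except for $\tilde w(0)=w^*$; altering a single point preserves measurability, local essential boundedness, and $\W$-valuedness. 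Applying $(\star)$ to $\tilde w$ excludes $(\xi,\tilde w(0))\in\D$ and hence forces VC-e($\xi,\tilde w$), which by the aforementioned insensitivity coincides with \vcexi. Finally, the ae-claims follow verbatim with (S1-ae) and \vcaexi replacing (S1-e) and \vcexi; there the single-point surgery is even more immediate, as every requirement in VC-ae is already almost-everywhere. The genuinely delicate ingredient throughout is recognizing that $\Pi_x(D^c,\W)$ is exactly the set of initial states from which $\D$ can be avoided by the choice of input at $t=0$, which is what makes the single-point modification legitimate.
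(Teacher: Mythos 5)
Your proposal is correct and follows essentially the same route as the paper: part (i) by unpacking the solution concept (jump possible iff $(\xi,w(0))\in\D$, flow possible iff \vcexi), and part (ii) via exactly the paper's key trick of modifying $w$ at the single time $t=0$ to a value $w^*$ with $(\xi,w^*)\notin\D$ and invoking the insensitivity of \vcexi to $w(0)$. Your write-up is in fact somewhat more explicit than the paper's (spelling out the degenerate jump domain $\{0\}\times\{0,1\}$ and the case split in the ``if'' direction of (ii)); note only that your parenthetical claim that $G(\xi,w(0))$ is nonempty is an implicit assumption rather than a consequence of the standing assumptions, but the paper's own proof relies on the same tacit premise.
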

Note that in {\em (ii)} the direct coupling between $w(0)$ and $\xi$ as in {\em (i)} disappeared.

\begin{proof}
    Item {\em (i)} regarding the existence of a nontrivial solution follows directly from the definition of e-solutions to $\cal H$. Indeed, a solution can jump if and only if $(\xi,w(0))\in\D$ and a solution can flow on a nontrivial time window in the sense of e-solutions if and only if VC-e($\xi,w$) holds.

    To prove {\em (ii)}, we start by showing the ``only if'' part and assume the existence of  nontrivial e-solutions for all $\xi\in\Xi_0 $ and all $w\in\LW$. Due to {\em (i)} this implies $(\xi,w(0))\in\D$ or VC-e($\xi,w$) holds for all $\xi\in\Xi_0 $ and all $w\in\LW$. Take now $\xi\in \Xi_0\cap \Pi_x(D^c,\W)$ and $w\in \LW$. Due to the former, there exists an $\omega' \in \W$ such that $(\xi,\omega')\not\in\D$. Define now $\tilde w\in \LW$ by $\tilde{w}(0)=\omega' $ and $\tilde{w}(t)=w(t)$, $t>0$. Also for $\xi$ and $\tilde w$ a nontrivial e-solution exists. As $(\xi,\tilde{w}(0))\not\in\D$, VC-e($\xi,\tilde{w}$) has to hold, which is equivalent to VC-e($\xi,w$) (as the VC-e($\xi,w$) definition does not depend on $w$ at time $0$). Hence, VC-e($\Xi_0\cap \Pi_x(D^c,\W),\LW$) holds. The ``if'' part follows directly from {\em (i)}.
\end{proof}

\subsection{Properties of maximal solutions}

Let us now study the properties of maximal solutions. In formulating this result, the set
\begin{equation}
    {\C}_0 := \Pi_x(\C,\W)
\end{equation}
will play an important role. The set $C_0$ consists of all states $x$ in $\ree^{n_x}$ for which there exist a $w\in W$ such that $(x,w)\in C$.

\begin{lemma} \label{lem:1}
    Let $\xi\in\ree^{n_x}$ and $w\in\LW$ and suppose that \vcaexi holds. Then $z:[0,\epsilon)\rightarrow \ree^{n_x}$ coming from the satisfaction of \vcaexi satisfies $z(t)\in {\C}_0$ for all $t\in [0,\epsilon]$.
\end{lemma}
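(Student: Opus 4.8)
The plan is to first establish that $z(t)\in{\C}_0$ for almost all $t\in[0,\epsilon]$ and then to upgrade this to \emph{every} $t$ by a closedness argument. For the almost-everywhere part, I would simply read off from \vcaexi that $(z(t),w(t))\in\C$ for almost all $t\in[0,\epsilon]$; since $w\in\LW$ we have $w(t)\in\W$ for all $t$, so by the definition of the projection $(z(t),w(t))\in\C$ forces $z(t)\in\Pi_x(\C,\W)={\C}_0$ for almost all $t$. As the exceptional set is null, the set of ``good'' times on which $z(t)\in{\C}_0$ is dense in $[0,\epsilon]$.

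The main obstacle is promoting ``for almost all $t$'' to ``for all $t$''. The tempting route---pick good times $t_n\to t^\star$, use continuity of the (absolutely continuous, hence continuous) arc $z$, and invoke closedness of ${\C}_0$---does not work directly, because ${\C}_0=\Pi_x(\C,\W)$ is a projection of a closed set and need not itself be closed. To circumvent this I would avoid working in ${\C}_0$ and instead lift the limiting argument back into $\C$, where closedness is available through SA\ref{sa:CW}. The extra ingredient that makes this possible is that $w$ is locally essentially bounded: on $[0,\epsilon]$ there is some $M$ with $|w(t)|\leq M$ for almost all $t$.

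Concretely, fix an arbitrary $t^\star\in[0,\epsilon]$. Discarding the union of the two null sets above, the set of times at which simultaneously $(z(t),w(t))\in\C$ and $|w(t)|\leq M$ still has full measure, hence is dense, so I can choose $t_n\to t^\star$ with $(z(t_n),w(t_n))\in\C$ and $|w(t_n)|\leq M$. The bounded sequence $(w(t_n))_n$ then admits a subsequence $w(t_{n_k})\to w^\star$; closedness of $\W$ gives $w^\star\in\W$, while continuity of $z$ gives $z(t_{n_k})\to z(t^\star)$. Since $(z(t_{n_k}),w(t_{n_k}))\in\C$ and $\C$ is closed, passing to the limit yields $(z(t^\star),w^\star)\in\C$, i.e.\ $z(t^\star)\in{\C}_0$. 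As $t^\star\in[0,\epsilon]$ was arbitrary, this gives the claim. The crucial steps are thus the use of essential boundedness to extract a convergent $w$-subsequence and the joint use of both closedness assumptions in SA\ref{sa:CW}; without them the measure-zero exceptional times could a priori escape ${\C}_0$.
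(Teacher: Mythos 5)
Your proposal is correct and follows essentially the same route as the paper's own proof: both lift the limiting argument into $\C$ rather than $\C_0$, picking times converging to the given $t$ at which $(z(\cdot),w(\cdot))\in\C$ holds, using essential boundedness of $w$ to extract a convergent subsequence $w(t_{n_k})\to w^\star$, and then invoking continuity of $z$ together with closedness of $\C$ and $\W$ to conclude $(z(t),w^\star)\in\C$, hence $z(t)\in\C_0$. Your explicit remark that $\C_0=\Pi_x(\C,\W)$ need not be closed (so one cannot argue in $\C_0$ directly) is a point the paper leaves implicit, but it does not change the argument.
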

\begin{proof}
    Take $t\in [0,\epsilon]$. Since $z$ is absolutely continuous and $w$ is essentially bounded (and thus $w$ takes values in a compact set for almost all times), combined with the fact that $(z(t),w(t))\in {\C} $ for almost all $t\in[0,\epsilon]$, there is a sequence $\{t_l\}_{l\in\nat}$ with $t_l\in (0,\epsilon)$ for each $l\in\nat$, and $t_l \rightarrow t$ when $l \rightarrow \infty$ satisfying that $w(t_l) \rightarrow \bar w$ for some $\bar w$ and $(z(t_l),w(t_l))\in {\C} $ for each $l\in\nat$. Clearly, since $w(t_l)\in \W$ for each $l\in\nat$, and $\W$ is closed, we have $\bar w\in {\W}$. Since $z$ is continuous, $z(t_l) \rightarrow z(t)$ when $l \rightarrow \infty$, and since ${\C}$ is closed, we have that $(z(t),\bar{w}) \in {\C}$. This establishes that $z(t)\in \C_0$.
\end{proof}
Some interesting remarks are in order with respect to ${\C}_0$. First, from Lemma~\ref{lem:1} it follows that satisfaction of VC-ae$(\xi,w$) or VC-e$(\xi,w$) for some $w \in \LW$ implies $\xi\in{\C}_0$. Hence, for $w \in \LW$ and $\xi\not\in {\C}_0 $ such that $(\xi,w(0))\not\in \D$, the set of (nontrivial) solutions $\sol$ with $\sol(0,0)=\xi$ and input $w$ is empty. Second, the lemma shows that during flow the state will not leave ${\C}_0$. So, even a discontinuity in $w$ at $t=\epsilon$ cannot get the state outside ${\C}_0$.

The next notions will play an important role when analyzing the domains of maximal solutions.
\begin{defn}
    For $a\in\R_{\geq0}$, define the left shift operator $S_a:{\cal L}\rightarrow {\cal L}$ for $w\in{\cal L} $ as $S_a(w) := \tilde w$ , where $\tilde w(t) = w(t+a)$ for all $t\geq 0$. Moreover, for $\mathbb{W}\subset{\cal L_W} $ and $a\in\R_{\geq0}$, we define $S_a(\mathbb{W}):=\{S_a(w)\mid w\in\mathbb{W}\}$ and $S(\mathbb{W}):=\bigcup_{a\in\R_{\geq0}}S_a(\mathbb{W})$.
\end{defn}
Loosely speaking, the set $S(\mathbb{W})$ is the collection of all left shifted versions of functions in $\mathbb{W}$. The next proposition establishes properties of the maximal solutions to hybrid system (\ref{eq:hybridsys}). It can be seen as an extension of \cite[Proposition 2.10]{Goe12} to hybrid systems with continuous-time inputs.

\begin{proposition}\label{prop:210b}
    Consider the hybrid system $\HS$.

    \noindent (i) {\bf [one input]}\ Let input $w\in\LW$ be given. If for each $a \geq 0$ and each $\xi\in{\C}_0$ it holds that $(\xi,w(a))\in\D$ or VC-e($\xi,S_a(w)$), then every maximal e-solution $\phi\in\mathcal{S}_\mathcal{H}^e({w})$ satisfies exactly one of the following properties:
    \begin{enumerate}[label=(\alph*)]
        \item $\phi$ is complete;
        \item $\phi$ is not complete and ``ends with flow'': $\dom \phi$ is bounded and the interval $I^J:=\{t:(t,J)\in\dom\phi\}$ with $J=\sup_j\dom\phi$ is open to the right, and there does not exist an absolutely continuous function $z:\overline{I^J}\rightarrow\R^{n_x}$ satisfying $\dot z(t)\in F(z(t), w(t))$ for almost all $t\in I^J$ and $(z(t), w(t))\in\C$ for all $t\in \interior I^J$, and such that $z(t)=\phi(t,J)$ for all $t\in I^J$;
        \item $\phi$ is not complete and ``ends with a jump'': $\dom\phi$ is bounded with $(T,J):=\sup\dom\phi\in \dom \phi$, $I^J=\{T\}$, $(\phi(T,J), w(T))\not\in\D$ and $\phi(T,J)\not\in{\C}_0$.
    \end{enumerate}

    \noindent (ii) {\bf [multiple inputs]}\ If VC-e(${\C}_0\cap \Pi_x(\D^c,\W),\LW$) holds, then all maximal solutions $\phi$ to ${\cal H}$ satisfy (a), (b), or (c) above.

    When VC-e is replaced by VC-ae, then all the above statements apply to ae-solutions, where in (b) the phrase ``$(z(t), w(t))\in\C$ for all $t\in I^J$'' is replaced by ``$(z(t), w(t))\in\C$ for almost all $t\in I^J$.''
\end{proposition}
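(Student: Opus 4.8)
The plan is to reduce part (ii) to part (i), prove (i) by dissecting the terminal interval of a maximal solution, and then transcribe the argument to ae-solutions. For the reduction I would verify that the blanket condition VC-e$(\C_0\cap\Pi_x(\D^c,\W),\LW)$ implies, for every fixed $w\in\LW$, the disjunctive hypothesis of (i): given $a\geq 0$ and $\xi\in\C_0$, either $(\xi,w(a))\in\D$, or $w(a)\in\W$ witnesses $(\xi,w(a))\in\D^c$ and hence $\xi\in\C_0\cap\Pi_x(\D^c,\W)$; since left shifts preserve membership in $\LW$, i.e. $S_a(w)\in\LW$, the blanket condition applied to $\xi$ and $S_a(w)$ yields VC-e$(\xi,S_a(w))$, which is exactly the alternative required in (i). This mirrors the decoupling argument in the proof of Proposition~\ref{prop:210}(ii), so (ii) follows once (i) holds.

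For (i) I would fix a maximal e-solution $\phi$; if $\dom\phi$ is unbounded we are in case (a), so suppose it is bounded and set $(T,J):=\sup\dom\phi$ with $J=\sup_j\dom\phi$. The argument then hinges on the shape of $I^J$, using the continuation logic of Proposition~\ref{prop:210}(i) (from a given point a jump is available iff that point lies in $\D$, and a nontrivial flow iff the corresponding VC-e holds). If $I^J=\{T\}$, I would derive (c): a jump would extend $\phi$, so $(\phi(T,J),w(T))\notin\D$; and if $\phi(T,J)\in\C_0$ then the hypothesis at $a=T$ forces VC-e$(\phi(T,J),S_T(w))$, whose flow can be adjoined as a \emph{new} flow piece having $T$ as its left endpoint — a time excluded from the flow-set requirement by (S1-e) — again contradicting maximality; hence $\phi(T,J)\notin\C_0$. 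If instead $I^J$ is open to the right, I would derive (b) from maximality: any absolutely continuous $z$ on $\overline{I^J}$ agreeing with $\phi$ on $I^J$ and satisfying the stated flow conditions would allow adjoining the single point $(T,J)\mapsto z(T)$ as a valid e-solution (the interior flow-set condition is inherited and nothing new is imposed at the endpoint), so no such $z$ can exist.

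The remaining configuration, $I^J=[t_J,T]$ closed with nonempty interior, must be ruled out, and I expect this to be the main obstacle. I would first transport the limiting device of Lemma~\ref{lem:1} to the arc $\phi(\cdot,J)$ — exploiting closedness of $\C$ and $\W$ and essential boundedness of $w$ — to place $\phi(T,J)\in\C_0$. The hypothesis at $a=T$ then supplies either a jump, which extends $\phi$, or VC-e$(\phi(T,J),S_T(w))$, i.e.\ a flow piece available from $\phi(T,J)$. The difficulty is that adjoining this flow at index $J$ turns $T$ into an \emph{interior} time of the enlarged interval $[t_J,T+\epsilon]$, so for the extension to be a genuine e-solution one needs the pointwise membership $(\phi(T,J),w(T))\in\C$ at the junction — something VC-e (which constrains only $(0,\epsilon)$) does not supply and which may fail when $w$ is discontinuous at $T$. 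Securing this junction membership, or otherwise excluding the configuration, is the delicate point on which the e-solution argument rests and the step I expect to cost the most effort.

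Finally, the three cases are mutually exclusive — (a) is the unbounded case, while within the bounded case (b) and (c) are separated by whether $I^J$ has nonempty interior or is the singleton $\{T\}$ — and the ae-statements follow by substituting (S1-ae) for (S1-e) and VC-ae for VC-e throughout. The obstacle above evaporates in the ae-setting: since flow-set membership is then required only almost everywhere, a VC-ae flow can be adjoined across the single junction time $T$ without constraint, so the closed-interval configuration is excluded directly by maximality and the case analysis closes cleanly.
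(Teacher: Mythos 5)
Your proposal follows the paper's own proof almost step for step: the reduction of (ii) to (i) via shift-invariance of $\LW$, the case analysis on the terminal interval $I^J$, the Lemma~\ref{lem:1}-type limiting argument placing $\phi(T,J)\in\C_0$ after nontrivial flow, and the separate ae-argument. The one step you stop short of --- excluding the configuration where $I^J=[t_J,T]$ is closed with nonempty interior --- is precisely the step the paper glosses over: its proof simply asserts that $\phi(T,J)\in\C_0$ lets $\phi$ ``be extended via flow'' thanks to VC-e($\phi(T,J),S_T(w)$), never addressing the junction membership $(\phi(T,J),w(T))\in\C$ that (S1-e) demands once $T$ becomes an interior time of the prolonged interval. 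So your diagnosis of the delicate point is exactly right, and the gap you flag is real --- but it is a gap in the paper's proof as much as in yours.

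In fact the step cannot be completed, because the configuration you could not exclude genuinely occurs under the stated hypothesis. Take $n_x=1$, $F\equiv\{0\}$, $\W=[0,1]$, $\C=\R\times\{0\}$, $\D=\emptyset$, and $w\in\LW$ with $w(1)=1$ and $w(t)=0$ for $t\neq 1$. Then $\C_0=\R$, and for every $a\geq 0$ and $\xi\in\R$ the condition VC-e($\xi,S_a(w)$) holds (take $z\equiv\xi$; then $w(a+t)=0$ for all $t\in(0,\epsilon)$ with $\epsilon$ small enough when $a<1$, and for all $t>0$ when $a\geq 1$), so the hypothesis of (i) is satisfied. Yet $\phi\equiv 0$ on $[0,1]\times\{0\}$ is a maximal e-solution: no jump is available since $\D=\emptyset$, and no flow prolongation exists because $t=1$ would become an interior time while $(\phi(1,0),w(1))=(0,1)\notin\C$. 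This $\phi$ is not complete, $I^0=[0,1]$ is neither open to the right (so (b) fails) nor a singleton (so (c) fails; moreover $\phi(1,0)=0\in\C_0$). Hence none of (a)--(c) holds, and the e-part of the proposition is false as literally stated. This is exactly the ``stranding by a discontinuity of $w$'' phenomenon the paper itself acknowledges for c\`adl\`ag inputs in Remark~\ref{rem:2} and accommodates in the weaker case (c) of Proposition~\ref{prop:210_v2}, where a maximal solution may stop with $(\phi(T,J),w(T))\notin\C\cup\D$ without $I^J$ being a singleton; case (c) of the present proposition makes no room for it.

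Your ae-argument, by contrast, is correct and complete: the junction time has measure zero, so the VC-ae flow concatenates into a genuine ae-solution, the closed-interval configuration is excluded by maximality, and the ae-version of the statement does hold (in the example above, $\phi$ is indeed not maximal as an ae-solution). Net assessment: your proof is the paper's proof, except that you correctly refuse to make the one unjustified leap the paper makes; that leap is not fixable for e-solutions with merely measurable inputs, and repairing the result requires weakening case (c) along the lines of Proposition~\ref{prop:210_v2}(c) (or strengthening the hypothesis to control $w$ at junction times).
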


\begin{proof}
    To prove \emph{(i)}, suppose that $\phi\in \mathcal{S}^e_\mathcal{H}(w)$ is a maximal solution that is not complete, i.e., $\dom\phi$ is bounded. Let $(T,J)=\sup\dom\phi$. If $(T,J)\in\dom\phi$ and $(\phi(T,J),w(T))\in\D$ or $\phi(T,J)\in {\C}_0$, then either $(\phi(T,J),w(T))\in\D$ in which case $\phi$ can be extended via a jump, or $\phi(T,J)\in {\C}_0$ in which case $\phi$ can be extended via flow. The latter follows thanks to VC-e$(\xi,S_a(w))$ holding for $a=T$ and $\xi=\phi(T,J) \in {\C}_0$. This would contradict maximality of $\phi$. Therefore, if $(T,J)\in\dom\phi$, necessarily $(\phi(T,J),w(T))\notin\D$ and $\phi(T,J)\notin {\C}_0$. But according to Lemma~\ref{lem:1}, we cannot have $\phi(T,J)\notin{\C}_0$ due to flow (even if there is a jump in $w$) so necessarily $I^J=\{T\}$ and the solution ends with a jump. Thus, either (c) holds or $(T,J)\not\in\dom\phi$. If the latter holds, then $I^J$ has to be right open, and the e-solution necessarily ends with flow since a jump necessarily leads to $(T,J)\in\dom\phi$. Besides, (b) must hold to ensure maximality of $\phi$. Indeed, if (b) would fail, the e-solution $\phi$ could be extended to an e-solution to $\cal H$ on $\overline{\dom\phi}$. Similar arguments lead to proving (ii) noting that $S(\LW)=\LW$. The ``ae-case'' can be proven similarly.
\end{proof}

\subsection{Examples}

Let us illustrate the use of Propositions~\ref{prop:210} and \ref{prop:210b}.
\begin{example} \label{ex:noise}
     Consider the {\em event-driven control system} in \cite{Astrom1999comparison} in which we removed the process noise and introduce measurement noise that takes values in $\W:= [-0.2,0.2]$, giving
    \begin{equation} \label{ex:h2}
       \begin{cases}
           \dot x\ \ = x& | x + w| \leq 1.5,\\
           x^+ = -w & | x + w| \geq 1.
        \end{cases}
    \end{equation}
    The resets of the state in \cite{Astrom1999comparison} lead to $x^+=0$, i.e. resets to zero due to an impulsive Dirac spike of intensity $x$, which was possible as the {\em exact} state was known. However, here we only know a noisy measurement $x+w$, so in a reset we can only apply an impulsive Dirac spike with intensity $-(x+w)$ resetting $x$ prior to the reset to $-w$ after the reset. Overall, the objective of this event-driven reset based on noisy measurements is aimed to keep the state inside the set $[-1.7,1.7]$. To apply Proposition~\ref{prop:210}, note that ${\C}_0= [-1.7,1.7]$. For $\xi\in{\C}_0$ and $w\in\LW$ with $(\xi,w(0))\not\in \D$, we have $|\xi|<1.2$ since $|\xi+w(0)|<1$ and $|w(0)|\leq 0.2$. Given the definition of the flow set, we see that the flow condition holds as long as $|x|\leq 1.3$. This shows that \vcexi and \vcaexi are both satisfied as it takes some positive time for the solution to $\dot x=x$ to grow from an absolute value below $1.2$ to an absolute value larger or equal than $1.3$. Hence, the existence of nontrivial ae- and e-solutions is guaranteed and any maximal solution satisfies only one of the properties (a)-(c) in Proposition~\ref{prop:210b}. In fact, it can be checked that only (a) can hold, since each maximal interval of flow $I^j$ is necessarily closed, and $|x+w|\leq 0.4$ after a jump. Therefore, all maximal (e- or ae-) solutions to $\cal H$ for input $w$ are complete -- in fact, $t$-complete. This property holds for any $w\in \LW$. Note that for any $w\in\mathcal{L}_{\W}$, each (e- and ae-) solution $\phi$ remains indeed in $[-1.7,1.7]$, so forward invariance of this set is guaranteed despite the presence of measurement noise.
\end{example}

However, there are some subtleties with Proposition~\ref{prop:210} due to the use of merely measurable inputs, as will be demonstrated with the next reworked example, where the overlap between ${\C}$ and ${\D}$ is smaller than in Example~\ref{ex:noise}.

\begin{example} \label{ex:noise2}
    We embed Example~\ref{ex:noise} (with $c=1.5$) in
    \begin{equation} \label{ex:h3}
        \begin{cases}
            \dot x\ \ = x& | x + w| \leq c,\\
            x^+=-w& | x + w| \geq 1.
        \end{cases}
    \end{equation}
    but now set $c=1$ and still adopt $ \W:= [-0.2,0.2]$. Even though the union of the jump and flow sets is the full space and both are closed, maybe counterintuitively, it is not the case that nontrivial (e- or ae-)solutions exist for all initial states and Lebesgue measurable input functions. For example, for $\sol(0,0)=\xi=1$ and $w\in \LW$ given by
    \[
        w(t) =  -0.2 \text{ for } t=0, \text{ and } w(t) = 0.2 \text{ for } t>0,
    \]
    neither jump nor flow are possible. Indeed, a jump is not possible as the jump condition $| \sol(0,0) + w(0)| \geq 1$ is not satisfied and neither \vcexi nor \vcaexi is satisfied, as, loosely speaking, $| x + w| \leq 1$ cannot be satisfied on a nontrivial flow interval. Essentially, a problematic issue for the existence of nontrivial solutions is that the possibility of a jump at $t = 0$ depends on $w(0)$ (i.e., $(\xi,w(0))\in \D$ is needed) while the value of $w(0)$ does not affect the satisfaction of \vcexi or \vcaexi. Informally, the information $(\xi,w(0))\not\in \D$ does not restrict $\xi$ sufficiently (as in Example~\ref{ex:noise}) to conclude that flow is possible for $w(t), \ t>0$. Similar problems arise for \eqref{ex:h3} as long as $c< 1.4$. Indeed, take $c= 1.4 - \eta$ for small positive $\eta$ and consider $\sol(0,0)=1.2-\frac{\eta}{2}$ and $w$ as above. No flow is possible (neither \vcexi nor \vcaexi holds) and a jump is not possible either.
\end{example}

In fact, if there are no restrictions on the set of inputs $\mathbb{W}\subset{\cal L}$, i.e., ${\W}=\ree$ and thus $\mathbb{W}={\cal L}$, then ${\C}_0=\ree$ in the example \eqref{ex:h3} above, and \vcexi or \vcaexi has to hold for all $w\in \mathcal{L}$ and all $\xi\in \Pi_x(D^c,\mathbb{R})=\ree$ in order to have existence of nontrivial solutions, which is not satisfied for this example and typically also not for many other systems. So, a tighter choice of ${\mathbb{W}}$ is needed, either through
\begin{itemize}
    \item using a sufficiently small set $\W$ in combination with overlap between the flow and jump map, e.g., for system \eqref{ex:h3} with $c>1$ and $W=[-\delta,\delta]$ with $0 \leq \delta \leq \frac{c-1}{2}$, as seen also in Example~\ref{ex:noise}, or
    \item imposing more regularity on the functions, as the use of the rich class of Lebesgue measurable inputs for which the value of $w(0)$ (important for jumps) is not related in any way to the function values $w(t)$ for times $t>0$ (important for flow, i.e., for the satisfaction of VC-ae or VC-e), renders existence of nontrivial solutions difficult. In fact, by restricting the input signals in $\mathbb{W}$ to be piecewise continuous, we can guarantee the existence of nontrivial (e- and ae-) solutions for all systems \eqref{ex:h3} with $c\geq 1$ for any $\W$. We will discuss this in Section~\ref{sec:pc}.
\end{itemize}

\section{Restricting the inputs to c\`adl\`ag signals} \label{sec:pc}

As we have seen in Example \ref{ex:noise2}, the existence of nontrivial solutions for all Lebesgue measurable inputs is often hard to guarantee in scenarios where $\C$ directly depends on $w$, apart from rather particular cases. One such particular case is Example \ref{ex:noise}, where the bounds on $w$ and the sufficiently large overlap between $\C$ and $\D$ allow us to guarantee the existence of nontrivial solutions for all $w\in {\cal L}_{\W}$. This particular case will be discussed in more detail in Section~\ref{subsubsec:measu}. In this section, we will show that under additional regularity assumptions on the inputs, namely piecewise continuity, or to be precise c\`adl\`ag, which still forms a rather rich class of inputs, the existence of nontrivial solutions can be more easily guaranteed, including systems such as \eqref{ex:h3}.

\begin{defn} \label{def:pc}
    A function $w : \R_{\geq0} \rightarrow \R^{n_w}$ is said to be {\em c\`adl\`ag} (``continue \`a droite, limite \`a gauche''), denoted by $w \in\cal{PC}$, when there exists a sequence $\{t_i\}_{i\in\N}$ with $t_{i+1}>t_i >t_0 =0$ for all $i\in\N$ and $t_i\rightarrow\infty$ when $i\rightarrow\infty$ such that $w$ is continuous on $(t_i,t_{i+1})$, where $\lim_{t\uparrow t_i} w(t)$ exists for all $i\in\N_{>0}$ and $\lim_{t\downarrow t_i} w(t)$ exists for all $i\in\N$ with $\lim_{t\downarrow t_i} w(t) = w(t_i)$, i.e., $w$ is piecewise continuous, right continuous and left limits exist for each $t_i$, $i\in\N_{>0}$. Given a set ${\W}\subseteq \R^{n_w}$, then we denote by $\cal{PC}_{\W}$ the set of functions $\{w\in \cal{PC} \mid w(t) \in {\W} \text{ for all } t \in \ree_{\geq 0}\}$.
\end{defn}
Note that continuous functions are contained in $\mathcal{PC}$ since $\{t_i\}_{i\in\mathbb{N}}$ can then be chosen arbitrarily. In the next proposition we provide conditions for the existence of nontrivial solutions for hybrid system \eqref{eq:hybridsys}, with the $\cal{PC}$ restriction on inputs. Interestingly, checking VC-ae or VC-e for a $w\in \cal{PC}$, is equivalent to checking it for a {\em continuous} input as the restriction of $w$ to a small enough interval is
continuous. Hence, since we consider closed sets $\C$, this causes VC-e and VC-ae to coincide, and, therefore, we will have only one viability condition defined as
\begin{description}
    \item[VC$(\xi,w)$:] \hspace{0.5cm} There exist $\epsilon>0$ and an absolutely continuous function $z:[0,\epsilon]\rightarrow\R^{n_x}$ such that $z(0)=\xi$, $\dot z(t) \in F(z(t),w(t))$ for almost all $t \in [0,\epsilon]$ and $(z(t),w(t))\in {\C}$ for all $t\in[0,\epsilon]$.
\end{description}
Similarly, we can show that in this context, e- and ae-solutions actually coincide.

\begin{proposition}\label{prop:210_v2}
    Consider the hybrid system $\HS$.
    \begin{enumerate}[label=(\roman*)]
        \item Any ae-solution to $\cal H$ with input $w\in \cal{PC}_{\W}$ is also an e-solution, hereafter called ``solution.''
        \item There exists a nontrivial solution $\phi$ to $\cal H$ for input $w\in\cal{PC}_{\W}$ with $\phi(0,0)=\xi\in\R^{n_x}$ if and only if $(\xi,w(0))\in\D$ or VC$(\xi, w)$ holds.
        \item If condition VC$(\xi,\bar w)$ holds for all $\xi\in\R^{n_x}$ and all $\bar w\in\mathcal{ PC}_{\W}$ with $(\xi,\bar w(0))\in{\C}\setminus\D$, then for all $ w\in\mathcal{ PC}_{\W}$ every maximal solution $\phi\in\mathcal{S}_\mathcal{H}^e(w)$ satisfies exactly one of the following properties:
    \end{enumerate}
    \begin{enumerate}[label=(\alph*)]
        \item $\phi$ is complete;
        \item $\phi$ is not complete and ``ends with flow'': $\dom \phi$ is bounded and the interval $I^J:=\{t:(t,J)\in\dom\phi\}$ with $J=\sup_j\dom\phi$ is open to the right, and there does not exist an absolutely continuous function $z:\overline{I^J}\rightarrow\R^{n_x}$ satisfying $\dot z(t)\in F(z(t), w(t))$ for almost all $t\in I^J$ and $(z(t),w(t))\in\C$ for all $t\in \interior I^J$, and such that $z(t)=\phi(t,J)$ for all $t\in I^J$;
        \item $\phi$ is not complete and ``ends with a jump'' or a ``discontinuity'' of $w$: $\dom\phi$ is bounded with $(T,J):=\sup\dom\phi\in \dom \phi$, $(\phi(T,J), w(T))\not\in{\C}\cup {\D}$.
    \end{enumerate}
\end{proposition}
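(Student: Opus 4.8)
The plan is to run every part of the proposition off a single device: for a c\`adl\`ag input the constraint $(\cdot,w(\cdot))\in\C$ can be upgraded from ``almost all $t$'' to ``all $t$'' on any flow interval. Concretely, if $z$ is continuous, $w$ is right-continuous, $\C$ is closed (SA\ref{sa:CW}), and $(z(t),w(t))\in\C$ for almost all $t$ in an interval, then for every interior point $t$ I would choose times $s_k\downarrow t$ at which the inclusion holds and pass to the limit using $z(s_k)\to z(t)$, $w(s_k)\to w(t)$, and closedness of $\C$ to obtain $(z(t),w(t))\in\C$. For \emph{(i)} this is immediate: an ae-solution already satisfies $\dot\sol(t,j)\in F(\sol(t,j),w(t))$ for almost all $t\in I^j$, hence for almost all $t\in\interior I^j$, and the upgrade turns the almost-everywhere membership $(\sol(t,j),w(t))\in\C$ into membership for \emph{all} $t\in\interior I^j$, which is exactly (S1-e). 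Thus every ae-solution is an e-solution.

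Second, I would establish the equivalence VC-ae$(\xi,w)\Leftrightarrow$ VC-e$(\xi,w)\Leftrightarrow$ VC$(\xi,w)$ for $w\in\cal{PC}_{\W}$, which legitimizes the single condition VC. The implications VC $\Rightarrow$ VC-e $\Rightarrow$ VC-ae are trivial (membership on $[0,\epsilon]$ gives it on $(0,\epsilon)$, and ``all'' gives ``almost all''). For VC-ae $\Rightarrow$ VC I would apply the upgrade above, crucially \emph{including the left endpoint} $t=0$ via right-continuity of $w$ at $0$, obtaining $(z(t),w(t))\in\C$ for all $t\in[0,\epsilon)$; shrinking $\epsilon$ slightly then yields membership on the full closed interval, i.e.\ VC. Part \emph{(ii)} then mirrors Proposition~\ref{prop:210}\emph{(i)}: a nontrivial solution can either jump, possible iff $(\xi,w(0))\in\D$, or flow on a nontrivial interval, possible iff VC-e$(\xi,w)$, equivalently VC$(\xi,w)$, holds; by \emph{(i)} the e- and ae-notions agree, so no distinction is needed.

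For \emph{(iii)} I would argue by maximality, following the skeleton of Proposition~\ref{prop:210b}. Assume $\sol\in\mathcal{S}_\mathcal{H}^e(w)$ is maximal and not complete, set $(T,J)=\sup\dom\sol$, and split on whether $(T,J)\in\dom\sol$. If $(T,J)\in\dom\sol$ and $(\sol(T,J),w(T))\in\D$, the solution can jump; if $(\sol(T,J),w(T))\in\C\setminus\D$, then with $\xi=\sol(T,J)$ and $\bar w=S_T(w)\in\cal{PC}_{\W}$ (left shifts preserve c\`adl\`ag regularity and $\bar w(0)=w(T)$) the hypothesis yields VC$(\xi,\bar w)$, hence VC-e, so the solution can flow; either way maximality is contradicted. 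Thus $(\sol(T,J),w(T))\notin\C\cup\D$, which is exactly (c). If instead $(T,J)\notin\dom\sol$, then $I^J$ is right-open and the solution ends with flow; were the nonexistence clause in (b) to fail, the extending arc $z$ on $\overline{I^J}$ would prolong $\sol$ to $\overline{\dom\sol}$, again contradicting maximality, so (b) holds. The case split on membership of $(T,J)$ makes (a), (b), (c) mutually exclusive.

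The main obstacle, and the step I would treat most carefully, is the left-endpoint part of the equivalence VC-e $\Leftrightarrow$ VC. It is what forces a flow extension to \emph{start} inside $\C$: because VC-e flow is equivalent to VC, the failure of $(\sol(T,J),w(T))\in\C$ genuinely precludes \emph{any} e-solution flow out of $(T,J)$, not merely flow whose starting pair lies in $\C$. This is precisely what makes case (c) tight with the clean characterization $(\sol(T,J),w(T))\notin\C\cup\D$, and it also yields the informal reading of (c): if the solution reached $(T,J)$ by flow with $w$ continuous at $T$, the same upgrade would place $(\sol(T,J),w(T))$ in $\C$ and permit continuation, so a non-extendable end via flow must coincide with a discontinuity of $w$ at $T$.
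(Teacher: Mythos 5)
Your proposal is correct and takes essentially the same approach as the paper: the same a.e.-to-everywhere upgrade (closedness of $\C$, right-continuity of $w$, continuity of the arc --- which the paper states contrapositively) gives (i) and the coincidence of VC-ae, VC-e and VC, and (ii)--(iii) then follow the definition-based and maximality/extension arguments of Propositions~\ref{prop:210} and~\ref{prop:210b}, respectively. The only differences are presentational: you spell out the left-endpoint handling in the VC equivalence and the shifted input $S_T(w)$ explicitly, whereas the paper covers the former in the discussion preceding the proposition and leaves the latter implicit in its proof.
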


\begin{proof}
    To prove \emph{(i)}, consider an ae-solution $\phi$ to $\cal H$ with $w\in \cal{PC}_{\W}$. This ae-solution is not an e-solution, if there exists a $j\in \dom \phi$ with $t \in \interior I^j$ such that $(\phi(t,j),w(t))\notin \C$. However, then, by closedness of $\C$, $(\phi(t,j),w(t))$ lies at a positive distance of ${\C}$. Therefore, by right-continuity of $w$ and continuity of $\phi(\cdot,j)$, $(\phi(t,j),w(t))$ cannot be contained ${\C}$ almost everywhere in $I^j$, contradicting that $\phi$ is an ae-solution. Hence, $\phi$ has to be an e-solution.

    Statement \emph{(ii)} regarding the existence of a nontrivial solution follows directly from the definition of a solution to $\cal H$ as solutions can either jump or flow, see also the proof of Proposition~\ref{prop:210}. Regarding the properties of maximal solutions in \emph{(iii)}, suppose that $\phi$ is a maximal solution that is not complete, i.e., $\dom\phi$ is bounded. Let $(T,J)=\sup\dom\phi$. If $(T,J)\in\dom\phi$ and $(\phi(T,J),w(T))\in{\C}\cup\D$, then either $(\phi(T,J),w(T))\in\D$ in which case $\phi$ can be extended via a jump, or $(\phi(T,J),w(T))\in{\C}\setminus\D$ in which case $\phi$ can be extended via flow, thanks to VC$(\xi,\bar w)$ holding for all $\bar w\in \mathcal{ PC}_{\W}$ with $(\xi,\bar w(0))\in{\C}\setminus\D$. Therefore, if $(T,J)\in\dom\phi$, necessarily $(\phi(T,J),w(T))\notin\C\cup \D$. According to the reasoning above, this can only happen after a jump or at the end of an interval of flow where $w$ is discontinuous. Thus, either (c) holds or $(T,J)\not\in\dom\phi$. If the latter holds, then the interior of $I^J$ is nonempty, since we could not get to $\phi(T,J)$ via a jump which would cause $(T,J)\in\dom\phi$, and (b) must hold to ensure maximality of $\phi$. Indeed, if (b) would fail, the solution $\phi$ could be extended to a solution to $\cal H$ on $\overline{\dom\phi}$.
\end{proof}

As discussed before Proposition \ref{prop:210_v2}, note that we only have to verify in
VC$(\xi,w)$ a property over an interval $[0,\epsilon]$ of continuity of $w$, and, hence, the VC conditions in \emph{(ii)} and \emph{(iii)} have only to be guaranteed for {\em continuous} inputs.

To illustrate the rationale behind our choice for c\`adl\`ag inputs, let us reconsider Example~\ref{ex:noise2} demonstrating that imposing $\mathcal{PC}$-regularity on the inputs leads to existence of nontrivial solutions as well as completeness of maximal solutions, while this was not the case for measurable inputs.

\begin{example} \label{ex:w0_revisit}
    We revisit Example~\ref{ex:noise2} and, in particular, system \eqref{ex:h3} with $c=1$, where the only difference will be that the input signals are in $\mathcal{PC}_{\W}$ instead of in $\LW$. With this restriction, we show that the existence of nontrivial solutions is now guaranteed for all initial states and all inputs in $\mathcal{PC}_{\W}$ based on Proposition~\ref{prop:210_v2}. Note that this statement did not hold when we considered inputs in $\LW$, as demonstrated in Example~\ref{ex:noise2}. If we compute ${\C}\setminus\D$ we obtain $\C\setminus\D= \{(\xi,w) \in \ree^2 \mid |\xi + w| <1\}$. As for any $w\in \mathcal{ PC}_{\W}$ there is an $\epsilon>0$ such that $w$ is continuous on $[0,\epsilon]$ and the solution to $\dot x=x$ is (locally absolutely) continuous too, we see that for every $\xi\in\R^{n_x}$ and $w\in \mathcal{PC}_{\W}$ with $(\xi,w(0))\in{\C}\setminus\D$, VC$(\xi,w)$ holds due to the strict inequality in the expression for $\C\setminus\D$. It takes some positive time for the continuous function $t\mapsto (z(t),w(t))$ to leave ${\C}$. Hence, the existence of nontrivial (ae- and e-)solutions is guaranteed (in fact, for $\W=\ree$) and, moreover, in this case we can even show that each maximal solution is complete, and even $t$-complete. This clearly contrasts the case of Lebesgue measurable inputs in which these existence and completeness properties did not hold for the same hybrid system.
\end{example}

\begin{remark} \label{rem:2}\label{Ex:disc}
Extra care is required regarding item (c) of Proposition~\ref{prop:210_v2}, if compared to the non-input case. In hybrid systems without inputs, it would be sufficient to prove that $G(\D)\subset \C\cup\D$ to exclude (c). However, when dealing with discontinuous inputs as defined above, discontinuities in $w$ could also result in (c) occurring.
To illustrate this phenomenon, consider system
\begin{equation} \label{ex:h4}
       \begin{cases}
           \dot{x}\ \ =-x-w& x + w \leq 1,\\
           x^+=-w & -2 \leq x+ w \leq 2.
        \end{cases}
    \end{equation} Take $\W = \ree$. It can be shown using Proposition~\ref{prop:210_v2} that for every initial state $\xi$ and every $w\in \mathcal{PC}_{\W}$ with $(\xi,w(0))\in {\C}\cup \D$ a nontrivial solution exists. However, not all maximal solutions are complete due to case (c) mentioned in Proposition~\ref{prop:210_v2} occurring as a result of discontinuities in $w$. Indeed, take $\sol(0,0)=\xi=1$ and $w(t) = -1$ if $t \in [0,1)$ and $w(t) = 2$ if $t \geq 1$. For this choice, a maximal solution is given by $\sol(t,0) = 1, \ t \in [0,1]$ on the hybrid time domain $[0,1]\times \{0\}$. The discontinuity in $w$ at time $1$ leads to $(\sol(1,0),w(1))\not\in{\C}\cup\D$. Note that jumps according to the jump map {\em cannot} lead to case (c) for this example.
\end{remark}

\section{Trajectory-independent viability conditions}
\label{sec:trajectind}

The viability conditions considered so far, namely VC-e or VC-ae, may not be easy to test as they are trajectory-dependent in the sense that they involve a solution $z$ to the differential inclusion $\dot z \in F(z,w)$. Hence, more explicit trajectory-independent conditions involving the tangent cone to $C$, that can be used to guarantee the viability properties of VC-e or VC-ae, are provided in this section. We start with inputs $w$ that are (piecewise) continuous and connect to Section~\ref{sec:pc} above. After that, we consider the more difficult case of measurable inputs.

We rely on the next extra properties throughout this section, where we use the definitions of outer semicontinuity and local boundedness (relative to a set) of $F$ as in \cite[Definitions 5.9 and 5.14]{Goe12}, respectively.

\begin{assumption} \label{sa:reg}
    Given $\HS$. The mapping $F:\R^{n_x}\times\R^{n_w}\rightrightarrows\R^{n_x}$ satisfies the following properties.
    \begin{enumerate}[label=(\roman*)]
        \item $F:\R^{n_x}\times\R^{n_w}\rightrightarrows\R^{n_x}$ is outer semicontinuous (osc) and locally bounded relative to $\C$.
        \item $F(\zeta,\omega)$ is non-empty and convex for each $(\zeta,\omega)\in\C$ with $\omega\in\W$.
    \end{enumerate}
\end{assumption}

Notice that Assumption \ref{sa:reg}(i) implies that $F(\zeta,\omega)$ is compact for each $(\zeta,\omega)\in\C$ with $\omega\in\W$, as closedness follows from SA1.
\subsection{Absolutely continuous and continuous inputs}

In this section, we exploit continuity properties of the continuous-time inputs to obtain explicit tangent cone conditions guaranteeing the viability condition VC-e($\xi,w$) for a given initial state $\xi$ and input $w$. As $w$ is assumed to be continuous, we have that the notions of ae- and e-solutions coincide as mentioned in Section \ref{subsect:def}.

\subsubsection{Absolutely continuous inputs}
The next proposition provides the desired tangent cone-based condition to ensure VC-e.

\begin{proposition} \label{prop:v:ac}
    Consider $\HS$ with Assumption~\ref{sa:reg} holding. Let $\xi\in\R^{n_x}$ and $w\in\LW$ be absolutely continuous (AC) with its derivative essentially bounded on an interval $[0,\varepsilon]$ for some $\varepsilon>0$. If there exists a neighborhood $U$ of $(\xi,w(0))$ such that for all $(\zeta, \omega) \in \C \cap U$ and almost all ${\tau}\in[0,\varepsilon)$
    \begin{equation}
        (F(\zeta, \omega) \times \{\dot w(\tau)\}) \cap T_C(\zeta, \omega) \neq \emptyset, \label{eq:vc-tc-AC}
    \end{equation}
    then VC-e($\xi,w$) holds.
\end{proposition}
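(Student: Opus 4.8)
The plan is to recast VC-e($\xi,w$) as a \emph{measurable viability problem} for an augmented differential inclusion and then invoke a standard viability theorem for inclusions with measurable time dependence (e.g., \cite{frankowska1996measurable,Carjua2000viability}). Concretely, since $w$ is absolutely continuous on $[0,\varepsilon]$ with essentially bounded derivative, I would treat $w$ as an extra state governed by $\dot\omega=\dot w(t)$. Writing $y=(\zeta,\omega)\in\R^{n_x}\times\R^{n_w}$ and setting
\[
\widetilde F(t,y):=F(\zeta,\omega)\times\{\dot w(t)\},
\]
the set $C\subseteq\R^{n_x}\times\W$ becomes the constraint set in the augmented space, the initial condition is $y(0)=(\xi,w(0))\in C$, and hypothesis \eqref{eq:vc-tc-AC} reads exactly $\widetilde F(\tau,y)\cap T_C(y)\neq\emptyset$ for all $y\in C\cap U$ and almost all $\tau$, i.e., the tangential/viability condition for $\widetilde F$ on $C$ (locally near $(\xi,w(0))$).

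Next I would check that $\widetilde F$ satisfies the hypotheses of the measurable viability theorem on a neighborhood of $(\xi,w(0))$. For almost every $t$ the map $y\mapsto\widetilde F(t,y)$ is outer semicontinuous and locally bounded relative to $C$ (inherited from Assumption~\ref{sa:reg}(i), as taking the Cartesian product with the singleton $\{\dot w(t)\}$ preserves both properties), and for each fixed $y$ the map $t\mapsto\widetilde F(t,y)=F(\zeta,\omega)\times\{\dot w(t)\}$ is measurable because $\dot w$ is measurable. Moreover $\widetilde F(t,y)$ is nonempty, convex and compact on $C$ (Assumption~\ref{sa:reg}(ii) together with the compactness noted after it, and $\{\dot w(t)\}$ being a singleton), and $\widetilde F$ is integrably bounded near $(\xi,w(0))$ since $F$ is locally bounded relative to $C$ and $\dot w$ is essentially bounded on $[0,\varepsilon]$. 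These are precisely the Carath\'eodory-type regularity, convex--compact values, integrable bound and tangential condition required by the viability results in \cite{frankowska1996measurable,Carjua2000viability,Aubin1984}.

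I would then apply the (local) measurable viability theorem to obtain $\epsilon\in(0,\varepsilon]$ and an absolutely continuous $y=(z,\omega):[0,\epsilon]\to\R^{n_x}\times\R^{n_w}$ with $y(0)=(\xi,w(0))$, $\dot y(t)\in\widetilde F(t,y(t))$ for almost all $t$, and $y(t)\in C$ for all $t\in[0,\epsilon]$. By construction $\dot\omega(t)=\dot w(t)$ a.e.\ and $\omega(0)=w(0)$, so absolute continuity forces $\omega(t)=w(t)$ on $[0,\epsilon]$; the first component then satisfies $\dot z(t)\in F(z(t),w(t))$ for almost all $t$, $z(0)=\xi$, and $(z(t),w(t))=y(t)\in C$ for all $t\in[0,\epsilon]$. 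This is (even slightly stronger than) VC-e($\xi,w$), which completes the argument.

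The main obstacle I anticipate is the \emph{locality} of the tangential condition: \eqref{eq:vc-tc-AC} is assumed only on $C\cap U$ rather than on all of $C$, whereas textbook viability theorems are often stated with a global tangential condition. I would resolve this by exploiting the bound on $\widetilde F$: any solution issued from $(\xi,w(0))$ is Lipschitz (its derivative is essentially bounded by the local bound on $F$ plus $\|\dot w\|_\infty$ on $[0,\varepsilon]$), hence stays inside $U$ for all sufficiently small times, so the tangential condition is effectively in force along the trajectory and a \emph{local} viability theorem (or the localized version obtained by intersecting $C$ with a small closed ball contained in $U$ and shrinking $\epsilon$) applies. Verifying that this localization does not alter the relevant tangent cone near the initial point --- and matching the precise measurability and semicontinuity hypotheses of the cited theorem to $\widetilde F$ --- is the delicate bookkeeping the full proof must handle.
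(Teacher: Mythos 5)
Your high-level strategy coincides with the paper's: it too augments the state with $w$, forms $H(t,\chi)=F(z,w)\times\{\dot w(t)\}$, and invokes the measurable-time viability theorem \cite[Thm.~2.3]{Carjua2000viability}; your regularity checks (measurability in $t$, outer semicontinuity and local boundedness in $\chi$, nonempty convex compact values) match the paper's verification of conditions (A)--(D) there, and your identification $\omega\equiv w$ on the interval of existence is sound. The genuine gap is exactly at the step you flag as ``delicate bookkeeping,'' and neither of your two proposed fixes closes it. The argument ``any solution issued from $(\xi,w(0))$ is Lipschitz, hence stays in $U$ for small times, so the tangential condition is in force along the trajectory'' is circular: the viability theorem is what produces the solution, and its tangential hypothesis must hold at \emph{every} point of the constraint set before any solution exists. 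Your fallback --- applying the global theorem to the localized constraint set $\C\cap \B((\xi,w(0)),r)$ with $\B((\xi,w(0)),r)\subseteq U$ --- fails on the boundary of the ball: for $y\in \C\cap\partial\B((\xi,w(0)),r)$ one only has $T_{\C\cap\B((\xi,w(0)),r)}(y)\subseteq T_C(y)$, possibly strictly (take $\C=\ree^{n_x}\times\W$ to see a half-space versus the whole space), so hypothesis \eqref{eq:vc-tc-AC}, which involves $T_C$, does not yield the required nonempty intersection with the tangent cone of the \emph{localized} set, and the theorem's hypotheses are simply not met. Nor can you appeal to an off-the-shelf local measurable viability theorem: as the paper remarks after Proposition~\ref{prop:v:ac}, the known local results (\cite[Thm.~3.4.2]{Aubin2009}, \cite[Lem.~5.26]{Goe12}) apply only to time-independent inclusions.

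The paper's resolution is to localize the \emph{dynamics} instead of the constraint set. It keeps $D=\C\cap\overline U$ as the constraint set but replaces the right-hand side on $\C\cap\partial U$ by $\operatorname{con}\left(H(t,\chi)\cup\{0\}\right)$, leaving $H$ unchanged on $\C\cap\interior(U)$. Since $0\in T_S(\chi)$ for every $\chi\in S$, the modified map $H_r$ trivially satisfies the tangential condition at boundary points, while on $\C\cap\interior(U)$ condition \eqref{eq:vc-tc-AC} applies and $T_{\C\cap\overline U}(\chi)=T_{\C}(\chi)$ there; the global theorem of \cite{Carjua2000viability} then gives a viable solution of the modified inclusion, and because the initial point lies in $\C\cap\interior(U)$, continuity keeps that solution in $\interior(U)$ for small times, where $H_r=H$, so it solves the original constrained inclusion and VC-e($\xi,w$) follows. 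Note this modification creates its own obligations --- one must re-verify measurability and semicontinuity of the convexified map, which the paper does via \cite[Prop.~14.11(b), Ex.~14.12]{rockafellar-wets-book} --- but with this boundary trick substituted for your ball-intersection step, the rest of your argument goes through essentially verbatim.
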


\begin{proof}
    The proof relies on \cite[Thm.~2.3]{Carjua2000viability} by embedding $t\mapsto w(t)$ as a state in an appropriate constrained differential inclusion. Thereto, denote the derivative of $w$ as $v$, such that $\dot{w}(t)=v(t)$ for almost all $t\in[0,\varepsilon]$ and let $U\subset\R^{n_x}\times W$. We define a new state variable $\chi=(z,w)$ and embed $\dot z\in F(z,w(t))$ with $(z,w(t))\in C\cap \overline U$ in
    \begin{equation} \label{eq:Hr}
        \dot\chi \in H_r(t,\chi) := \left\{\begin{array}{lllll}
            H(t,\chi) & \chi\in C\cap \interior(U) \\
            \operatorname{con}\left(H(\chi,t)\cup\{0\}\right) & \chi\in C\cap \partial U
        \end{array}\right.
    \end{equation}
    with $H(t,\chi):=F(z,w) \times \{v(t)\}$, $\operatorname{con}$ denoting the convex hull, and $\partial U:=\overline U \setminus \interior(U)$ the boundary of $U$. Note that $H$ satisfies (A)-(D) in \cite{Carjua2000viability} taking the set $D$ in \cite[Thm.~2.3]{Carjua2000viability} as $D=\C \cap \overline{U}$. Indeed, the osc and local boundedness of $F$ (due to Assumption~\ref{sa:reg}) guarantee the upper semicontinuity of $H(t,\cdot)$ in $C$ (condition (B)) for almost all $t$ using \cite[Lemma~5.15]{Goe12}. Moreover, for every $\chi \in C \cap \overline{U}$ the set-valued map $H(\cdot,\chi)$ is measurable on $[0,\varepsilon)$ (condition (A) in \cite{Carjua2000viability}) because $v$ is measurable as $w$ is AC. Conditions (C) and (D) in \cite{Carjua2000viability} follow from Assumption~\ref{sa:reg}.

    Conditions (A)-(D) in \cite{Carjua2000viability} also hold for $H_r$, as $H_r(t,\chi)=\operatorname{con}(H(t,\chi)\cup L(\chi))$ with $L$ defined as $L(\chi)=\emptyset$ for $\chi\in C\cap\interior(U)$ and $L(\chi)=\{0\}$ for $\chi\in C\cap\partial U$. Indeed, the set-valued map $H(t,\chi)\cup L(\chi)$ is measurable in time $t$ for any $\chi\in C\cap\overline U$ by \cite[Proposition 14.11(b)]{rockafellar-wets-book} as so are $H(t,\chi)$ and $L(\chi)$. As a consequence, $\operatorname{con}(H(t,\chi)\cup L(\chi))$ is measurable in $t$ for all $\chi\in C\cap\overline U$ by \cite[Exercise 14.12]{rockafellar-wets-book}. Hence, condition (A) in \cite{Carjua2000} holds for $H_r$. On the other hand, for almost all $t\in[0,\varepsilon)$, $H(t,\cdot)\cup L(\cdot)$ is locally bounded (as its maps bounded sets to bounded sets \cite[Proposition 5.15]{rockafellar-wets-book}) and is outer semicontinuous being the pointwise union of two outer semicontinuous set-valued maps. Consequently, $H_r(t,\cdot)$ is locally bounded and outer semicontinuous for almost all $t\in[0,\varepsilon)$ \cite[Exercise 4.18]{goebel-book2024}, which proves that condition (B) holds for $H_r$. Conditions (C) and (D) then follow. The last condition to check is (V1) in \cite[Thm.~2.3]{Carjua2000viability}. Property (\ref{eq:vc-tc-AC}) implies that for any $\chi\in C\cap\interior(U)$ and almost all $t\in[0,\varepsilon)$, $H_r(t,\chi)\cap T_{C\cap\overline U}(\chi)\neq \emptyset$ as in this case $H_r(t,\chi)=H(t,\chi)=F(\zeta,w)\times\{\dot w(t)\}$ with $\chi=(\zeta,w)$ and $T_{C\cap\overline U}(\chi)=T_{C}(\chi)$ as $\chi\in C\cap\interior(U)$. We also have that for any $\chi\in C\cap\partial U$ and almost all $t\in[0,\varepsilon)$, $\{0\}\subset T_{C\cap\overline{U}}(\chi)$ therefore $H_r(t,\chi)\cap T_{C\cap\overline U}(\chi)\neq \emptyset$. We can thus apply \cite[Thm.~2.3]{Carjua2000viability} to conclude that for any $\chi_0\in C\cap\overline U$, there exists a solution to
    \begin{equation}
        \begin{array}{rlll}
            \dot\chi\in H_r(t,\chi) & & \chi\in C\cap \overline U
        \end{array}\label{eq:sys-Hr}
    \end{equation}
    initialized at $\chi_0$ that is defined over $[0,\varepsilon_r)$ for some $\varepsilon_r\in(0,\varepsilon]$. Finally, the latter is now exploited for the original system
    \begin{equation}
        \begin{array}{rllll}
            \dot z\in F(z,\omega) & & (z,\omega)\in C.
        \end{array}\label{eq:sys-constrained-differential-inclusion}
    \end{equation}
    Given absolutely continuous $w$, consider $\xi\in C$ and a neighborhood $U$ of $\xi$. We know that there exists a solution $(\zeta,w)$ to (\ref{eq:sys-Hr}) that is defined over some interval of time $[0,\varepsilon_r)$ with $\varepsilon_r>0$. Since $(\xi,w(0))\in C\cap\interior(U)$ and $(\zeta,w)$ is continuous, there exists $\widehat{\varepsilon}_r\in(0,\overline\varepsilon_r)$ such that $(\zeta(t),w(t))\in C\cap\interior (U)$. As $H_r=H$ on $C\cap\interior(U)$, we derive that $(\zeta,w)$ is an e-solution to (\ref{eq:sys-constrained-differential-inclusion}), i.e. VC-e$(\xi,w)$ holds, which concludes the proof.
\end{proof}

\begin{remark}
    A challenge in the proof of Proposition \ref{prop:v:ac} is that most viability results in the literature, such as \cite{Carjua2000viability}, provide `global' conditions guaranteeing existence of solutions to constrained differential inclusions for {\em all} initial states in the constraint set. Since we are interested to derive `local' conditions for existence of solutions for a particular initial state (and given input function) extra care and arguments are needed to apply the global viability results to obtain local viability results. We will see that also in the proof of Proposition~\ref{prop:V:split} similar arguments have to be used. There are only few local viability results in the literature, such as \cite[Thm.~3.4.2]{Aubin2009} and \cite[Lem.~5.26]{Goe12}, but they both apply to time-independent (and thus input-free) scenarios only.
\end{remark}

\subsubsection{Continuous inputs}
We now relax the previous requirement on $w$ and assume that it is only continuous. This allows us to exploit \cite[Proposition 6.10]{Goe12} by adding the time as an extra state variable. Hereto, given continuous $w$, we consider the set-valued map $K_w: [0,\infty) \rightrightarrows \ree^{n_x}$ with
\begin{equation}
    K_w(t):=\{\zeta\in \ree^{n_x} \mid (\zeta,w(t))\in \C\} \quad \text{for} \quad t\in \ree_{\geq 0}.
\end{equation}
Note $\graph(K_w)= \{(t,\zeta)\in [0,\infty)\times \ree^{n_x} \mid (w(t),\zeta)\in C\}$.

\begin{proposition}\label{prop:continuous-w}
    Consider $\HS$ with Assumption~\ref{sa:reg} holding. Let $\xi\in \R^{n_x}$, $w$ be continuous on an interval $[0,\varepsilon)$ for some $\varepsilon>0$ and suppose that $K_w(t) \neq \emptyset$ for all $t\in [0,\varepsilon)$. If there exists a neighborhood $U$ of $\xi$ such that, for all $\zeta \in U$ and all ${\tau}\in[0,\varepsilon)$, satisfying $(\zeta, w(\tau)) \in C$,
    \begin{equation}
        \{1\} \times F(\zeta, w(\tau))\cap T_{\graph(K_w)}(\tau,\zeta) \neq \emptyset, \label{eq:vc-tc-C}
    \end{equation}
    then VC-e($\xi,w$) holds.
\end{proposition}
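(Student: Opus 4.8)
The plan is to remove the time dependence by promoting time to a state variable, turning the non-autonomous constrained inclusion into an autonomous one to which the neighborhood viability result \cite[Proposition 6.10]{Goe12} applies. Concretely, I would introduce the augmented state $\chi=(\tau,z)\in\R^{1+n_x}$ and the autonomous hybrid system $\hat{\mathcal H}=(\hat C,\hat F,\hat D,\hat G)$ with flow map $\hat F(\tau,z):=\{1\}\times F(z,w(\tau))$, flow set $\hat C:=\graph(K_w)\cap([0,\varepsilon_1]\times\R^{n_x})$ for a fixed $\varepsilon_1\in(0,\varepsilon)$, and empty jump data $\hat D=\emptyset$, initialized at $\chi_0=(0,\xi)$. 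Since $\hat D$ is empty, any nontrivial solution from $\chi_0$ must flow, and as $\hat F$ forces $\dot\tau=1$ with $\tau(0)=0$ we obtain $\tau(s)=s$; the $z$-component of such a solution will be exactly the function witnessing VC-e($\xi,w$).

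The first block of work is to verify the Hybrid Basic Conditions so that \cite[Proposition 6.10]{Goe12} applies. Closedness of $\hat C$ follows because $(\tau,\zeta)\mapsto(\zeta,w(\tau))$ is continuous on the time slab $[0,\varepsilon_1]\times\R^{n_x}$ (using continuity of $w$) and $\C$ is closed by SA\ref{sa:CW}, so $\hat C$ is the preimage of a closed set. For the regularity of $\hat F$, I would argue that composing the osc, locally bounded map $F$ (Assumption~\ref{sa:reg}(i)) with the continuous map $(\tau,z)\mapsto(z,w(\tau))$ preserves outer semicontinuity and local boundedness relative to $\hat C$, while the Cartesian product with $\{1\}$ affects neither; nonemptiness and convexity of $\hat F$ on $\hat C$ follow from Assumption~\ref{sa:reg}(ii), since $w(\tau)\in\W$ along $\hat C$.

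Next I would check the neighborhood viability hypothesis of \cite[Proposition 6.10]{Goe12}. Choosing $\hat U:=(-a,\varepsilon_1)\times U$ for a small $a>0$, every $\chi=(\tau,\zeta)\in\hat U\cap\hat C$ has $\tau\in[0,\varepsilon_1)$ and $\zeta\in U$ with $(\zeta,w(\tau))\in\C$; for such points the cap $\tau\leq\varepsilon_1$ is inactive, so $T_{\hat C}(\tau,\zeta)=T_{\graph(K_w)}(\tau,\zeta)$, and hypothesis \eqref{eq:vc-tc-C} gives precisely $(\{1\}\times F(\zeta,w(\tau)))\cap T_{\graph(K_w)}(\tau,\zeta)\neq\emptyset$, i.e.\ $\hat F(\chi)\cap T_{\hat C}(\chi)\neq\emptyset$. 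Thus \cite[Proposition 6.10]{Goe12} yields a nontrivial solution $\chi=(\tau,z)$ to $\hat{\mathcal H}$ from $\chi_0$, defined on some $[0,\delta]\times\{0\}$ with $\delta\in(0,\varepsilon_1)$. Substituting $\tau(s)=s$, the $z$-component is absolutely continuous with $z(0)=\xi$, satisfies $\dot z(s)\in F(z(s),w(s))$ for almost all $s$, and obeys $\chi(s)\in\hat C$, hence $(z(s),w(s))\in\C$, for all $s$ in the flow interval and in particular on $(0,\delta)$; this is exactly VC-e($\xi,w$) with $\epsilon=\delta$.

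The main obstacle, as the remark preceding the statement anticipates, is the local-versus-global gap in viability theory: I must ensure that the tangent cone hypothesis, localized in the state yet holding over the whole time window, aligns with the neighborhood condition required by \cite[Proposition 6.10]{Goe12}, and that restricting the time axis to the closed slab $[0,\varepsilon_1]$ does not corrupt the relevant tangent cones near $\chi_0$. The delicate bookkeeping is that $\graph(K_w)$ already encodes $\tau\geq0$, so $T_{\graph(K_w)}(0,\xi)$ admits only forward-in-time directions, compatible with $\dot\tau=1>0$, and that for small times the upper cap $\tau\leq\varepsilon_1$ stays inactive so the two tangent cones coincide. The remaining care is reconciling the interior/boundary membership in the flow-solution concept of \cite{Goe12} with the ``for all $t\in(0,\epsilon)$'' clause in the definition of VC-e.
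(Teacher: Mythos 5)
Your proposal is correct and follows essentially the same route as the paper: the paper likewise promotes time to a state variable, $\chi=(t,z)$ with $\dot\chi\in\{1\}\times F(z,w(t))$ constrained to $\graph(K_w)$, and invokes the local viability result \cite[Lem.~5.26]{Goe12}, of which your use of \cite[Proposition 6.10]{Goe12} with empty jump data is just the hybrid-system packaging. Your truncation of the time slab to $[0,\varepsilon_1]\subset[0,\varepsilon)$ to secure closedness of the constraint set (where the paper simply asserts closedness of $\graph(K_w)$ from continuity of $w$, even though $w$ is only assumed continuous on $[0,\varepsilon)$) is, if anything, slightly more careful than the paper's version of the same argument.
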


\begin{proof}
    We define a new state variable $\chi=(t,z)$ and embed $\dot z\in F(z,w(t))$ with $(z,w(t))\in C$ in the autonomous constrained differential inclusion
    \begin{equation}\label{eq:sys-proof-continuous-w-J}
        \dot\chi \in J(\chi) \quad \chi \in \graph(K_w)
    \end{equation}
    with $J(\chi) := \{1\}\times F(z,w(t))$. To apply \cite[Lem.~5.26]{Goe12}, we first observe that $\graph(K_w)$ is closed due to the continuity of $w$ and the closedness of $\C$. Moreover, $J$ is outer semicontinuous, locally bounded and takes non-empty convex values in view of its definition and Assumption~\ref{sa:reg}. Let $\xi\in\R^{n_x}$ and $w$ be as in the proposition. We now check the condition in \cite[Lemma 5.26(b)]{Goe12} for \eqref{eq:sys-proof-continuous-w-J} at initial state $(0,\xi)$. Consider a neighborhood $U$ of $\xi$ as in Proposition~\ref{prop:continuous-w}. Let $\nu\in(0,\varepsilon)$ and take $V:=U\times(-\nu,\nu)$ as a neighborhood of $(0,\xi)$. Let $(\tau,\zeta)\in \graph(K_w)\cap V$. Clearly, then $\tau\geq 0$ as $\graph(K_w)$ is only defined for $t\geq 0$. In case $\tau\geq 0$, (\ref{eq:vc-tc-C}) holds and thus for all $(\tau,\zeta)\in \graph(K_w)\cap V$ we have $J(\tau,\zeta)\cap T_{\graph(K_w)}(\tau,\zeta)\neq \emptyset$. Consequently, the condition in \cite[Lemma 5.26(b)]{Goe12} holds from which we derive that there exists a solution $\chi=(t,z)$ to (\ref{eq:sys-proof-continuous-w-J}) on $[0,\widehat\varepsilon)$ with $\widehat\varepsilon\in(0,\varepsilon]$. Hence, $\dot z\in F(z,w(t))$ for almost all $t\in[0,\widehat \varepsilon)$ and $(z(t),w(t))\in C$ for all $t\in[0,\widehat \varepsilon)$, i.e.,  VC-e$(\xi,w)$ holds.
\end{proof}

\subsection{Measurable inputs}

In this section, we allow inputs $w$ to be measurable, which becomes rather intricate when they affect the flow set $C$.

\subsubsection{Measurable inputs not affecting $C$}

We first consider the scenario in which the inputs $w$ can be split as $(w_1,w_2)$ with $w_1$ taking values in $\ree^{n_{w_1}}$ that affect $\C$ and are assumed to be AC, and $w_2$ with values in $\ree^{n_{w_2}}$ but not affecting $\C$, and the input $w_2$ is allowed to be measurable. Hence, we assume that $\C$ can be written as $\C = \C_1 \times \ree^{n_{w_2}}$.

\begin{proposition}\label{prop:V:split}
    Consider $\HS$ with Assumption~\ref{sa:reg} holding. Assume that $w=(w_1,w_2)$ with $\C = \C_1 \times \ree^{n_{w_2}}$, $\C_1 \subseteq\ree^{n_{x}}\times \ree^{n_{w_1}} $, $w_1$ AC with its derivative essentially bounded on an interval $[0,\varepsilon]$ for some $\varepsilon>0$, and $w_2$ is  measurable and essentially bounded on $[0,\varepsilon]$. If there exists a neighborhood $U$ of $(\xi,w_1(0))$ such that for all $(\zeta, \omega)$ with $(\zeta, \omega_1) \in \C_1 \cap U$ and almost all ${\tau}\in[0,\varepsilon)$
    \begin{equation}
        F(\zeta, \omega_1,w_2(\tau)) \times \{\dot w_1(\tau)\} \cap T_{C_1}(\zeta, \omega_1) \neq \emptyset, \label{eq:vc-tc-ACandM}
    \end{equation}
    then VC-e($\xi,w$) holds.
\end{proposition}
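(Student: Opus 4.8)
The plan is to mirror the proof of Proposition~\ref{prop:v:ac}, but to embed only the absolutely continuous component $w_1$ as an auxiliary state, while keeping the merely measurable component $w_2$ as an explicit time-dependence inside the flow map. This is possible precisely because $w_2$ does not enter the constraint, as $\C=\C_1\times\ree^{n_{w_2}}$, so there is no need (and, $w_2$ being non-AC, no way) to track $w_2$ as a state. Writing $v_1:=\dot w_1$, which is measurable and essentially bounded on $[0,\varepsilon]$, I would introduce the augmented state $\chi=(z,w_1)\in\ree^{n_x}\times\ree^{n_{w_1}}$ and the time-dependent map
\[
H(t,\chi):=F(z,w_1,w_2(t))\times\{v_1(t)\},
\]
together with the constraint set $\C_1\cap\overline U$, and then study the constrained inclusion $\dot\chi\in H(t,\chi)$, $\chi\in\C_1\cap\overline U$. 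A solution initialized at $(\xi,w_1(0))$ has, after integrating the last block, $w_1$-component equal to the prescribed $w_1$, and its $z$-component will certify VC-e($\xi,w$).

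The core is to verify conditions (A)--(D) of \cite[Thm.~2.3]{Carjua2000viability} for $H$. Condition (B), upper semicontinuity of $\chi\mapsto H(t,\chi)$ relative to $\C_1$ for almost all $t$, holds as before: for fixed $t$ the value $w_2(t)$ is frozen, so $(z,w_1)\mapsto F(z,w_1,w_2(t))$ is outer semicontinuous relative to $\C_1$ (a partial outer semicontinuity inherited from the joint property of $F$ relative to $\C$, using $\C_1\times\{w_2(t)\}\subseteq\C$) and locally bounded, hence upper semicontinuous with compact values by \cite[Lemma~5.15]{Goe12}; the factor $\{v_1(t)\}$ is a singleton. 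Conditions (C)--(D) follow from the local boundedness of $F$ relative to $\C$ together with the essential boundedness of $v_1$ and $w_2$ and the boundedness of $w_1$ on $[0,\varepsilon]$.

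The main obstacle, and the genuinely new point compared with Proposition~\ref{prop:v:ac}, is condition (A): the measurability in $t$ of $t\mapsto H(t,\chi)$ for fixed $\chi$. In Proposition~\ref{prop:v:ac} the only time-dependence of the flow block came through the measurable derivative $v$, whereas here $F$ depends explicitly on time through $w_2(t)$. I would obtain measurability of $t\mapsto F(z,w_1,w_2(t))$ by writing it as the composition of the measurable function $w_2$ with the set-valued map $\omega_2\mapsto F(z,w_1,\omega_2)$. The latter is outer semicontinuous and locally bounded on \emph{all} of $\ree^{n_{w_2}}$: here the decomposition $\C=\C_1\times\ree^{n_{w_2}}$ is crucial, since it guarantees that every $(z,w_1,\omega_2)$ with $(z,w_1)\in\C_1$ lies in $\C$, so outer semicontinuity ``relative to $\C$'' upgrades to full outer semicontinuity in the $\omega_2$ direction, and hence this map is measurable with closed values. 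The composition of a Lebesgue-measurable function with a measurable closed-valued map is measurable \cite{rockafellar-wets-book}, and taking the product with the measurable singleton $\{v_1(t)\}$ preserves measurability, giving (A).

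With (A)--(D) in place, I would regularize exactly as in Proposition~\ref{prop:v:ac}, replacing $H$ by $H_r(t,\chi)=\operatorname{con}(H(t,\chi)\cup L(\chi))$ with $L(\chi)=\emptyset$ on $\C_1\cap\interior(U)$ and $L(\chi)=\{0\}$ on $\C_1\cap\partial U$; the same convex-hull and union arguments preserve (A)--(D), and the added $\{0\}$ confines solutions to $\C_1\cap\overline U$. The viability hypothesis (V1), namely $H_r(t,\chi)\cap T_{\C_1\cap\overline U}(\chi)\neq\emptyset$ for almost all $t$ and all $\chi\in\C_1\cap\overline U$, follows from \eqref{eq:vc-tc-ACandM} on the interior (where $H_r=H$ and $T_{\C_1\cap\overline U}=T_{\C_1}$) and from $0\in T_{\C_1\cap\overline U}(\chi)$ on the boundary. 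Applying \cite[Thm.~2.3]{Carjua2000viability} at $\chi_0=(\xi,w_1(0))$ yields a solution $\chi=(\zeta,w_1)$ on some $[0,\varepsilon_r)$; since $\chi_0\in\C_1\cap\interior(U)$ and $\chi$ is continuous, there is $\widehat\varepsilon_r\in(0,\varepsilon_r)$ with $\chi(t)\in\C_1\cap\interior(U)$ on $[0,\widehat\varepsilon_r)$, where $H_r=H$. On this subinterval the last block forces the $w_1$-component to satisfy $\dot w_1$-component $=v_1=\dot w_1$ a.e., so it coincides with the prescribed $w_1$; consequently $(\zeta(t),w(t))\in\C$ for all such $t$ and $\dot\zeta(t)\in F(\zeta(t),w_1(t),w_2(t))=F(\zeta(t),w(t))$ for almost all $t$, so $z=\zeta$ certifies VC-e($\xi,w$).
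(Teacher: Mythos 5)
Your proposal is correct and takes essentially the same route as the paper's proof: you embed only $w_1$ as an augmented state while keeping $w_2$ as measurable time-dependence in the flow map, verify conditions (A)--(D) of \cite[Thm.~2.3]{Carjua2000viability} on $\C_1\cap\overline{U}$ with the key measurability condition (A) obtained from the Rockafellar--Wets composition theorem (\cite[Thm.~14.13(b)]{rockafellar-wets-book}) applied to the measurable single-valued $w_2$ and the osc map $\omega_2\mapsto F(\zeta,\omega_1,\omega_2)$, and then reuse the regularization $H_r$ and the interior-extraction argument of Proposition~\ref{prop:v:ac} verbatim. Your explicit check that the solution's $w_1$-component integrates $v_1$ from $w_1(0)$ and hence coincides with the prescribed $w_1$ on the interior subinterval is a detail the paper leaves implicit, but the underlying argument is identical.
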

\begin{proof}
    The proof extends the proof of Proposition~\ref{prop:v:ac} by considering $\chi=(x,w_1)$ and embedding $\dot z\in F(z,w)$, $(z,w)\in \C$ in
    \begin{equation}\label{eq:H}
        \dot\chi \in \tilde{H}(t,\chi):= F(\chi,w_2(t)) \times \{v_1(t)\},\ \ \chi\in C_1
    \end{equation}
    where $v_1$ denotes $\dot{w}_1$. Hence, $w_1$ is embedded as a state variable but $w_2$ is not. Following similar steps, realizing that $\tilde H$ satisfies (A)-(D) in \cite{Carjua2000viability} taking the set $D$ in \cite[Thm.~2.3]{Carjua2000viability} as $D=\C_1 \cap \overline{U}$ as in the proof of Proposition~\ref{prop:v:ac}, using \cite{Carjua2000viability} establishes the result. Note that to show (A), i.e., $\tilde H$ satisfying the required measurability of $t\mapsto \tilde H(t,\chi)$ for fixed $\chi=(\zeta,\omega_1)\in C_1\cap U$, we use \cite[Thm.~14.13(b)]{rockafellar-wets-book} for fixed $\zeta,\omega_1$ in the following way: We take for $S$ in \cite[Thm.~14.13(b)]{rockafellar-wets-book} $S:=w_2$, which is closed-valued (it has single values) and is measurable on $[0,\varepsilon]$ by assumption and $M(t,\cdot): \omega_2\mapsto F(\zeta,\omega_1,\omega_2)$ is osc due to Assumption~\ref{sa:reg}, for given fixed $\zeta,\omega_1$. Moreover, note that $M(t,\cdot)$ is the same for all $t$, once $\zeta$ and $\omega_1$ are fixed). Hence, from \cite[Thm.~14.13(b)]{rockafellar-wets-book}, it follows that $t\mapsto F(\zeta,\omega_1,\omega_2(t))$ is measurable on $[0,\varepsilon]$ for each fixed $\chi=(\zeta,\omega_1)\in C_1\cap U$. Hence, as $v_1$ is measurable, also $\tilde H$ is measurable on $[0,\varepsilon]$ and thus (A) holds. (B)-(D) in \cite{Carjua2000viability} can be obtained as in the proof of Proposition~\ref{prop:v:ac}. Finally, the remainder of the proof follows as in the proof of Proposition~\ref{prop:v:ac} considering \eqref{eq:Hr} with $H$ replaced by $\tilde H$ as in \eqref{eq:H}.
\end{proof}

Note that Proposition~\ref{prop:V:split} has Proposition~\ref{prop:v:ac} as a special case by taking $w=w_1$ (and $w_2$ being absent) requiring $w$ to be locally AC. In addition, Proposition~\ref{prop:V:split} has the result of \cite{Carjua2000viability} in which $C$ is not dependent on $w$ (in the context of \cite{Carjua2000viability} not depending on time $t$) as a special case (take $w=w_2$ in Proposition~\ref{prop:V:split}). Finally, the results in \cite{frankowska1996measurable} are also closely related to Proposition~\ref{prop:V:split} as the constraint set is allowed to depend on time, but in an AC manner, similar to our assumption that $w_1$ affecting $C$ has to be (locally) AC.

\subsubsection{Measurable inputs affecting $\C$}
\label{subsubsec:measu}

In case the inputs affecting $\C$ are not continuous, and, thus $w$ can abruptly change at each time instant due to only being measurable, the sufficient conditions have to take into account that $\C$ ``seen from the perspective of the state $x$'' can change instantaneously, or in terms of \cite{frankowska1996measurable}, the constraint set does not depend (absolutely) continuously on time $t$. Stronger conditions are needed in this case, as stated in the next proposition.

\begin{proposition} \label{prop:CD}
    Consider $\HS$ with Assumption~\ref{sa:reg} holding. Let $\xi\in\ree^{n_x}$ and $w\in {\cal L}_{\W}$ be given, and assume that  there exists a $\delta>0$ such that $\mathbb{B}(\xi,\delta)\times {\W}\subseteq {\C}$. Then \vcaexi and \vcexi hold.
\end{proposition}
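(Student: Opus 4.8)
The plan is to exploit that the hypothesis $\B(\xi,\delta)\times\W\subseteq\C$ makes the flow constraint \emph{automatic} near $\xi$: for any state $\zeta\in\B(\xi,\delta)$ and any admissible input value $\omega\in\W$ one has $(\zeta,\omega)\in\C$. Since every $w\in\LW$ satisfies $w(t)\in\W$ for \emph{all} $t\geq 0$ (not merely almost all), the pair $(z(t),w(t))$ lies in $\C$ for all $t$ as soon as a candidate arc $z$ stays inside $\B(\xi,\delta)$. Hence it suffices to produce an absolutely continuous $z:[0,\epsilon]\to\R^{n_x}$ with $z(0)=\xi$, $\dot z(t)\in F(z(t),w(t))$ for almost all $t$, and $z(t)\in\B(\xi,\delta)$ for all $t\in[0,\epsilon]$. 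Such an arc satisfies the flow condition \emph{pointwise} on $[0,\epsilon]$, so it establishes both \vcexi and \vcaexi at once. In this way the state constraint essentially disappears, and only an existence question for the time-dependent differential inclusion $\dot z\in F(z,w(t))$ remains.

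First I would assemble the data for a Carath\'eodory/Filippov-type existence argument. Since $w$ is locally essentially bounded, there are $\epsilon_0>0$ and a compact set $K_w\subseteq\W$ with $w(t)\in K_w$ for almost all $t\in[0,\epsilon_0]$; then $\B(\xi,\delta)\times K_w$ is a compact subset of $\C$, so by local boundedness of $F$ relative to $\C$ (Assumption~\ref{sa:reg}(i)) and a covering argument there is a uniform bound $M>0$ with $F(\zeta,\omega)\subseteq\B(0,M)$ for all $(\zeta,\omega)\in\B(\xi,\delta)\times K_w$. Writing $\tilde F(t,\zeta):=F(\zeta,w(t))$, I would verify the required conditions: (i) for fixed $\zeta$, $t\mapsto\tilde F(t,\zeta)$ is measurable on $[0,\epsilon_0]$, via the same application of \cite[Thm.~14.13(b)]{rockafellar-wets-book} used in the proof of Proposition~\ref{prop:V:split} (measurability of $w$ together with osc of $\omega\mapsto F(\zeta,\omega)$); (ii) for fixed $t$, $\zeta\mapsto\tilde F(t,\zeta)$ is outer semicontinuous, being a slice of the jointly osc map $F$; (iii) $\tilde F$ has nonempty, compact, convex values on $\B(\xi,\delta)$ for almost all $t$ and is bounded there by $M$, by Assumption~\ref{sa:reg}.

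With these ingredients I would invoke an existence result for differential inclusions with measurable time dependence. Staying inside the paper's toolbox, I would apply \cite[Thm.~2.3]{Carjua2000viability} exactly as in the proof of Proposition~\ref{prop:v:ac}, taking the constraint set to be $\B(\xi,\delta)$ and, on its boundary, replacing $\tilde F$ by the relaxation $\operatorname{con}(\tilde F\cup\{0\})$; since the zero vector lies in the Bouligand tangent cone at any point of a set, $0\in T_{\B(\xi,\delta)}(\zeta)$ for every boundary point $\zeta$, so the viability condition (V1) holds trivially on the boundary and holds on the interior because there $T_{\B(\xi,\delta)}(\zeta)=\R^{n_x}$. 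This yields an arc $z$ defined on $[0,\epsilon_r)$ and staying in $\B(\xi,\delta)$; because $z(0)=\xi$ lies in the interior and $z$ is continuous, $z(t)$ remains in $\interior\B(\xi,\delta)$ on a possibly smaller interval $[0,\epsilon)$, on which the relaxation is inactive and $z$ genuinely solves $\dot z\in F(z,w(t))$. Alternatively, one may bypass the viability theorem and apply a standard existence theorem (of Filippov type) to the unconstrained inclusion $\dot z\in\tilde F(t,z)$, then use the velocity bound $|\dot z(t)|\leq M$ to pick $\epsilon\leq\delta/M$ so that $|z(t)-\xi|\leq M t\leq\delta$; either route delivers the desired arc on a closed subinterval $[0,\epsilon]$.

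The main obstacle I anticipate is the \emph{localization}: the cited viability theorem is global, asserting existence from every point of the constraint set under a tangent-cone condition holding everywhere on it, whereas here I need a solution from a single prescribed $\xi$ for a single measurable $w$. The relaxation-plus-continuity device -- adding $\{0\}$ on the boundary and then arguing that the solution stays interior for a short time -- is precisely the mechanism converting the global statement into the local one, and its correct deployment is the delicate part. A secondary technical point is the careful verification of the time-measurability of $t\mapsto F(\zeta,w(t))$ for merely measurable $w$, which is exactly why the Rockafellar--Wets measurability machinery is needed. Once existence on $[0,\epsilon]$ with values in $\B(\xi,\delta)$ is secured, the conclusion that both \vcexi and \vcaexi hold is immediate from the automatic satisfaction of the flow constraint noted at the outset.
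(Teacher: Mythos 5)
Your proposal is correct and takes essentially the same approach as the paper: the paper likewise reduces the problem to local existence for the \emph{unconstrained} time-dependent inclusion $\dot z\in F(z,w(t))$ (citing \cite{Carjua2000viability} together with the Rockafellar--Wets measurability argument for $t\mapsto F(z,w(t))$), and then uses continuity of $z$ with $z(0)=\xi$ to keep $(z(t),w(t))\in \B(\xi,\delta)\times\W\subseteq\C$ on a short interval, giving both VC-e($\xi,w$) and VC-ae($\xi,w$) at once. Your ``alternative route'' is precisely the paper's proof; the primary route through the viability theorem with $\B(\xi,\delta)$ as constraint set and a boundary relaxation is valid but an unnecessary detour, since the constraint is vacuous near $\xi$.
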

\begin{proof}
    Due to the standing assumptions, there exist $\varepsilon>0$ and an absolutely continuous function $z:[0,\varepsilon]\rightarrow\R^{n_x}$ such that $z(0)=\xi$, $\dot z(t) \in F(z(t),w(t))$ for almost all $t\in [0,\varepsilon]$, see, e.g, \cite{Carjua2000viability}, where we used again \cite[Thm.~14.13]{rockafellar-wets-book} to show that $t\mapsto F(z,w(t))$ for fixed $z$ satisfies the required measurability properties. Note that this solution $z$ is unrelated to $\C$; it just states that a local solution exists to the differential inclusion without the constraints. However, due to the hypothesis in the proposition, there exists a $\delta>0$ such that $(\mathbb{B}(\xi,\delta),w(t)) \subseteq {\C}$ for all $t\in \ree_{\geq 0}$. Since $z$ is continuous and $z(0)=\xi$, $(z(t),w(t)) \in {\C}$ for all $t\in [0,\delta']$ for some $0<\delta' < \delta$ and, hence, \vcaexi and \vcexi hold.
\end{proof}

The condition in Proposition \ref{prop:CD} is a geometric check that there is some ``margin'' around $\xi$ to the boundary of $C$ for all admissible values of the input. Clearly, if this holds for all $\xi$ in a set $\Xi_0\subseteq \R^{n_x}$, then \vcset holds, as needed in statement \emph{(ii)} in Proposition~\ref{prop:210}.

To show the use of Proposition~\ref{prop:CD}, consider $C$ and $D$ as
\begin{subequations} \label{eq:outputsets}
    \begin{eqnarray}
        {\C} & = &\{(\zeta,\omega) \in \ree^{n_x} \times \W \mid h(\zeta)+ \omega \in {\C}_y\}, \\ {\D} & = & \{(\zeta,\omega) \in \ree^{n_x} \times \W\mid h(\zeta)+ \omega \in {\D}_y\},
    \end{eqnarray}
\end{subequations}
where $\C_y\subseteq \ree^{n_y}$ and $\D_y\subseteq \ree^{n_y}$ and $h:\ree^{n_x}\rightarrow \ree^{n_y}$. This case is relevant in, e.g., event-triggered control settings with measurement noise \cite{Scheres_Postoyan_Heemels_2024,BorDol_TAC18a,Mousavi_et-al-cdc2019}, see also Example~\ref{ex:noise} above.

\begin{lemma} \label{lem:CD}
    Consider sets ${\C}$ and $\D$ as in \eqref{eq:outputsets} for a continuous function $h:\ree^{n_x} \rightarrow \ree^{n_y}$, closed sets $\C_y\subseteq \ree^{n_y}$ and $\D_y\subseteq \ree^{n_y}$. Let ${\W}$ be given. For all $\xi\in\C_0\cap \Pi_x(\D^c,\W)$ the condition of Proposition~\ref{prop:CD} holds, if
    \begin{equation} \label{eq:sets}
        \range(h) \cap (\C_y - \W) \cap ( \D_y^c-\W)\subseteq \interior( {\C_y}\ominus W),
    \end{equation}
    where $\range(h):=\{ h(x) \mid x\in \ree^{n_x}\}$. In case $h$ is also an open map in the sense that it maps open sets to open sets, this condition is also necessary.
\end{lemma}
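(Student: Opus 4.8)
The plan is to push both the hypotheses and the conclusion of Proposition~\ref{prop:CD} through the output map $h$, so that everything becomes a statement in the output space $\ree^{n_y}$. First I would record the ``dictionary'' of equivalences obtained by directly unwinding \eqref{eq:outputsets}, the projection $\Pi_x$, and the two difference operations. Unwinding $\C_0=\Pi_x(\C,\W)$ gives $\xi\in\C_0\iff h(\xi)\in\C_y-\W$, since the existential over $\omega\in\W$ produces the \emph{normal} difference; reading the complement in the full space $\ree^{n_x}\times\ree^{n_w}$ gives $\xi\in\Pi_x(\D^c,\W)\iff h(\xi)\in\D_y^c-\W$. The key translation is that of the conclusion of Proposition~\ref{prop:CD}: the condition $\B(\xi,\delta)\times\W\subseteq\C$ means $h(\zeta)+\omega\in\C_y$ for \emph{every} $\zeta\in\B(\xi,\delta)$ and \emph{every} $\omega\in\W$, and the universal quantifier over $\omega$ produces the \emph{Pontryagin} difference, so this is equivalent to $h(\B(\xi,\delta))\subseteq\C_y\ominus\W$.

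For the sufficiency (``if'') direction, given $\xi\in\C_0\cap\Pi_x(\D^c,\W)$ the dictionary places $h(\xi)$ in $\range(h)\cap(\C_y-\W)\cap(\D_y^c-\W)$, so hypothesis~\eqref{eq:sets} yields $h(\xi)\in\interior(\C_y\ominus\W)$. Continuity of $h$ then furnishes a $\delta>0$ with $h(\B(\xi,\delta))\subseteq\C_y\ominus\W$: I would take an open ball around $h(\xi)$ contained in $\C_y\ominus\W$, pull it back to an open neighborhood of $\xi$, and shrink to a closed ball $\B(\xi,\delta)$. By the translation above this is exactly $\B(\xi,\delta)\times\W\subseteq\C$, i.e.\ the condition of Proposition~\ref{prop:CD}.

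For the necessity (``only if'') direction under the open-map assumption, I would take an arbitrary $y$ in the left-hand set of \eqref{eq:sets}, write $y=h(\xi)$ using $y\in\range(h)$, and note the dictionary again gives $\xi\in\C_0\cap\Pi_x(\D^c,\W)$. The standing hypothesis then supplies $\delta>0$ with $h(\B(\xi,\delta))\subseteq\C_y\ominus\W$. The open-map property does precisely the remaining work: the image $h(\interior\B(\xi,\delta))$ of the open ball is an \emph{open} set containing $y$ and contained in $\C_y\ominus\W$, whence $y\in\interior(\C_y\ominus\W)$. This proves the inclusion \eqref{eq:sets}.

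The main obstacle is the set-algebra bookkeeping rather than any analytic difficulty: one must keep straight that the existential quantifiers defining $\C_0$ and $\Pi_x(\D^c,\W)$ yield the \emph{normal} differences $\C_y-\W$ and $\D_y^c-\W$, while the universal quantifier in the conclusion of Proposition~\ref{prop:CD} yields the \emph{Pontryagin} difference $\C_y\ominus\W$, and that $\D^c$ is complemented in the full space. Once this duality is set up correctly, both directions reduce to a single continuity/openness step, with the open-map hypothesis serving exactly to upgrade ``$y$ lies in a set containing the $h$-image of a neighborhood of $\xi$'' to ``$y$ is an interior point of that set.''
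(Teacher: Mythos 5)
Your proof is correct and follows essentially the same route as the paper's: the same dictionary translating $\C_0$, $\Pi_x(\D^c,\W)$, and the condition $\B(\xi,\delta)\times\W\subseteq\C$ into the normal differences $\C_y-\W$, $\D_y^c-\W$ and the Pontryagin difference $\C_y\ominus\W$, followed by the same continuity argument for sufficiency and the same open-map argument for necessity. No gaps; your write-up is in fact somewhat more explicit than the paper's terse version.
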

\begin{proof}
    We have ${\C}_0= \{ \zeta \mid h(\zeta) \in \C_y - {\W} \}$ and $\Pi_x(\D^c,\W) = \{ \zeta \mid h(\zeta) \in \D_y^c - {\W} \}$. Hence,
    \[
        \C_0\cap \Pi_x(\D^c,\W) = \{ \zeta \mid h(\zeta) \in (\C_y - {\W}) \cap (\D_y^c - {\W}) \}.
    \]
    This shows that item \emph{(ii)} of Proposition~\ref{prop:CD} holds  for all $\xi\in \C_0\cap \Pi_x(\D^c,\W)$, if and only if all $\xi$ with \[ h(\xi) \in (\C_y - {\W}) \cap (\D_y^c - {\W})\] have a $\delta>0$ such that $h(\B(\xi,\delta))+ {\W}\subseteq { \C}_y$, i.e., in terms of the Pontryagin difference $h(\B(\xi,\delta)) \subseteq ( {\C}_y\ominus \W).$ When $h$ is continuous, this is guaranteed under the stated conditions, as $h(\B(\xi,\delta))$ is a subset of $\B(h(\xi),\varepsilon)\cap \range(h)$, where $\varepsilon>0$ can be made arbitrarily small by choosing $\delta>0$ small. When $h$ is an open map, the condition \eqref{eq:sets} is clearly necessary, as $h(\B(\xi,\delta))$ contains $\B(h(\xi),\varepsilon')$ for some $\varepsilon'>0$.
\end{proof}

\begin{example}
    We revisit Example~\ref{ex:noise} and demonstrate the application of Lemma~\ref{lem:CD}. Note that $h(x)=x$. Moreover, $C=[-1.5,1.5]$, $D^c=[-1,1]$, $\W:= [-0.2,0.2]$. We get $D^c-\W=[-1.2,1.2]$ and ${{C}}-\W=[-1.7,1.7]$ and thus $({D}^c-\W)\cap ({{C}} - \W) = [-1.2,1.2]$. Finally, $\interior( {C}\ominus \W) =\interior ([-1.5,1.5]\ominus [-0.2,0.2])=(-1.3,1.3)$. Thus, \eqref{eq:sets} indeed holds as $[-1.2,1.2]\subseteq (-1.3,1.3)$. Hence, since item (i) for Proposition~\ref{prop:CD} is trivially satisfied for Example~\ref{ex:noise} the existence of nontrivial solutions is guaranteed for all $\sol(0,0)=\xi \in {\C}_0$ and all $w\in{\cal L}_{\W}$ based on Proposition~\ref{prop:210}.
 \end{example}

\section{Conclusions} \label{sec:conclusions}
We presented solution concepts for hybrid dynamical inclusions affected by continuous-time input signals. Conditions for the existence and the completeness were given involving trajectory-dependent viability conditions (for both measurable and piecewise continuous input signals). Afterwards, we leveraged and developed viability results for non-autonomous constrained differential inclusions to arrive at trajectory-independent viability conditions using tangent cones.

\bibliographystyle{IEEEtran}
\bibliography{references}

\end{document}